\theoremstyle{plain}
\newtheorem{theorem}{Theorem}[section]
\newtheorem{definition-theorem}[theorem]{Definition-Theorem}
\newtheorem{definition-proposition}[theorem]{Definition-Proposition}
\newtheorem{proposition}[theorem]{Proposition}
\newtheorem{example}{Example}[section]
\newtheorem{examples}{Example}[subsection]
\newtheorem{remark}{Remark}[section]
\theoremstyle{definition}
\newtheorem{definition}{Definition}[section]
\numberwithin{equation}{section} % requires package amsthm
 \newcommand{\sgn}{\mathop{\mathrm{sgn}}}
\DeclareMathOperator{\cyc}{cyc}
\DeclareMathOperator{\aut}{aut}
\def\ra{{\rightarrow}}
\def\wt{\widetilde}
\def\be{\begin{equation}}
\def\ee{\end{equation}}
\def\bea{\begin{eqnarray}}
\def\eea{\end{eqnarray}}
\def\bt{\begin{theorem}}
\def\et{\end{theorem}}
\def\bex{\begin{example}\small \rm}
\def\eex{\end{example}}
\def\bexs{\begin{examples}\small \rm}
\def\eexs{\end{examples}}
\def\ra{\rightarrow}
\def\ss{\subset}
\def\br{\begin{remark}\small \rm}
\def\er{\end{remark}}
\def\wt{\widetilde}
\def\&{&{\hskip -20pt}}
\def\CC{\mathcal{C}}
\def\MM{\mathcal{M}}
\def\WW{\mathcal{W}}
\def\Ib{\mathbf{I}}
\def\Nb{\mathbf{N}}
\def\Nb{\mathbf{N}}
\def\Pb{\mathbf{P}}
\def\Zb{\mathbf{Z}}
\begin{document}
\baselineskip 16pt
%\begin{flushright}
%CRM-3372(2019)
%\end{flushright}
\medskip
\begin{center}
\begin{Large}\fontfamily{cmss}
\fontsize{17pt}{27pt}
\selectfont
	\textbf{Generating weighted Hurwitz numbers}
	\end{Large}
	
\bigskip \bigskip
\begin{large}  M. Bertola$^{1, 2, 3}$\footnote{e-mail: Marco.Bertola@concordia.ca, Marco.Bertola@sissa.it}, 
J. Harnad$^{1, 2}$\footnote {e-mail: harnad@crm.umontreal.ca}  
and B. Runov$^{1,2}$\footnote{e-mail: boris.runov@concordia.ca}
 \end{large}
 \\
\bigskip
\begin{small}
$^{1}${\em Department of Mathematics and Statistics, Concordia University\\ 1455 de Maisonneuve Blvd.~W.~Montreal, QC H3G 1M8  Canada}\\
\smallskip
$^{2}${\em Centre de recherches math\'ematiques, Universit\'e de Montr\'eal, \\C.~P.~6128, succ. centre ville, Montr\'eal, QC H3C 3J7  Canada}\\
$^{3}${\em SISSA/ISAS, via Bonomea 265, Trieste, Italy }
\end{small}
 \end{center}
\medskip
%\begin{small}\begin{center}
%\today
%\end{center}\end{small}
%%%%%%%%%%%%%%%%  Abstract  %%%%%%%%%%%%%%%%
\begin{abstract}
    Multicurrent correlators associated to KP $\tau$-functions of hypergeometric
type are used as generating functions for weighted Hurwitz numbers. 
These are expressed as formal Taylor series  and used to compute  generic, simple, rational and quantum weighted single Hurwitz numbers.
\end{abstract}

%%%%%%%%%%%%%%%  Section 1.Introduction %%%%%%%%%%%
\section{Introduction}

It is well-known that KP  $\tau$-functions of hypergeometric type may serve as combinatorial generating functions for 
weighted Hurwitz numbers \cite{Ok, Pa,  AMMN, GH1, HO, GH2, H1}. An efficient way of computing the latter is
to make use of the associated multicurrent correlators \cite{ACEH2, ACEH3}.  This method 
is implemented in the following, without recourse to matrix integral representations \cite{AC1, AC2, BH}
or topological recursion \cite{EO2, BEMS, KZ, ACEH1, ACEH2, ACEH3}.

Section \ref{gen_fns_weighted_Hurwitz} reviews  the use of KP and $2D$-Toda $\tau$-functions
as  generating functions for weighted Hurwitz numbers.
Their definition is recalled, with the special cases of {\em simple} Hurwitz numbers, {\em rationally weighted} 
Hurwitz numbers (\ref{G_ratl_c_d}) and {\em quantum} Hurwitz numbers as illustrative examples. 
 The starting point, developed in refs.~\cite{GH1, HO, GH2}, is Theorem \ref{tau_gener_fn_weighted_hurwitz}, 
 which, generalizing the earlier results of Okounkov \cite{Ok} and Pandharipande \cite{Pa} for simple Hurwitz numbers,
  states that KP $\tau$-functions \cite{Sa, SW}  of special {\em hypergeometric} type \cite{OrSc} are generating 
  functions for weighted Hurwitz numbers corresponding to arbitrary weight generating functions $G(z)$,  
 
 The new element in the present work is the observation that a more effective tool is provided by the
  multicurrent correlators $W_g(x_1, \dots, x_n)$  introduced in  \cite{ACEH2, ACEH3} in the context of the 
  topological recursion (TR) approach. As explained in Proposition \ref{F_n_gen_fn}, these provide 
generating functions for single weighted Hurwitz numbers $H^d_G(\mu)$, grouped according to the length $n=\ell(\mu)$ 
of the partition $\mu$ giving the ramification profile over a selected branch point. Making use of the polynomial expressions 
for  $W_g(x_1, \dots, x_n)$ in terms of pair correlators $K(x_i, x_j)$ given in Proposition \ref{prop:detConnected},  together
with the explicit expressions (\ref{K0_exp}), (\ref{rho_ab}) of the latter as weighted polynomials in the Taylor 
coefficients $\{g_i\}_{i\in \Nb}$ of $G(z)$, the multicurrent correlators are directly expressible as series 
with coefficients that are  weighted polynomials in the $g_i$'s, without the need to solve any recursion relations. 

 The main result is Theorem \ref{weighted_hurwitz_formulae},  which provides explicit formulae
 for weighted Hurwitz numbers as graded polynomials in these Taylor coefficients.
  Using the multicurrent correlators as generating functions, the weighted single Hurwitz numbers  
  $H^d_G(\mu)$ are determined for ramification profile $\mu$ of length $n:= \ell(\mu)$ equal to $1, 2$ or $3$. 
  This is first done in the generic case, for arbitrary weight generating function $G(z)$, with explicit formulae  for  
  $H^d_G(\mu)$ for small values of $N$, $d$ and $\ell(\mu)$ given in Appendix \ref{eval_weighted_hurwitz}, 
  Tables \ref{Hurwitz_table4} - \ref{Hurwitz_table7}. These are  specialized to various particular cases in 
  Tables \ref{Hurwitz_table8}-\ref{Hurwitz_table13}: the exponential case $G(z)=e^z$, which determines 
{\em simple} Hurwitz numbers; polynomial and rational weight generating functions, which include, in particular, 
strictly and weakly monotone Hurwitz numbers as well as weighted enumeration of Belyi curves \cite{AC1,  HO,  Z, GGN1, GH2}, 
and, finally, quantum Hurwitz numbers \cite{H2, GH2}.

%%%%%%%%%%\\\%%%%%%  Section 2. Generating functions for weighted Hurwitz numbers %%%%%%%%%%%
\section{Generating functions for weighted Hurwitz numbers}
\label{gen_fns_weighted_Hurwitz}

%%%%%%%%%%  Subsection 2.1.Weighted Hurwitz numbers and weight generating functions %%%%%%%%%%%
\subsection{Pure and weighted Hurwitz numbers}

We recall the definition of {\em pure} Hurwitz numbers \cite{Hu1, Frob1,  Hu2, Frob2, Sch, LZ} and {\em weighted} 
Hurwitz numbers \cite{GH1, HO, H1, GH2}. 
\begin{definition}[Combinatorial]
For a  set of $k$ partitions $\{\mu^{(i)}\}_{i=1,\dots, k}$ of $N\in \Nb^+$, the {\em pure} Hurwitz number 
$H(\mu^{(1)}, \dots, \mu^{(k)})$  is ${1\over N!}$ times the number of distinct ways that the identity element $\Ib_N \in S_N$ 
in the symmetric group on $N$ elements can be expressed as a product
\be
\Ib_N = h_1 \cdots h_k
\label{factoriz_hi}
\ee
of $k$ elements $\{h_i\in S_N\}_{i=1, \dots, k}$, such that for each $i$, $h_i$ belongs
to the conjugacy class $\cyc(\mu^{(i)})$ whose cycle lengths are equal to the parts of $\mu^{(i)}$:
\be
h_i \in \cyc(\mu^{(i)}), \quad i =1, \dots, k.
\ee
\end{definition}

An equivalent definition involves the enumeration of branched coverings of the Riemann sphere.
\begin{definition}[Geometric]
For a set of partitions $\{\mu^{(i)} \}_{i=1,\dots, k}$ of weight $|\mu^{(i)}|=N$, 
the pure Hurwitz number  $H(\mu^{(1)}, \dots, \mu^{(k)})$ is  defined geometrically  \cite{Hu1, Hu2}  as the number
of inequivalent $N$-fold branched coverings  $\CC \ra \Pb^1$  of the Riemann sphere with $k$ branch points 
whose ramification profiles are given by the partitions $\{\mu^{(1)}, \dots, \mu^{(k)}\}$, 
normalized by the inverse  $1/|\aut (\CC)|$ of the order of the automorphism group of the covering. 
 The connected pure Hurwitz number $\wt{H}(\mu^{(1)}, \dots, \mu^{(k)})$ is defined in the same way, 
 with only connected coverings included in the enumeration.
 \end{definition}
The equivalence of the two follows from  the monodromy homomorphism 
\be
\MM:\pi_1 (\Pb^1/\{Q^{(1)}, \dots, Q^{(k)})\} \ra S_N
\ee
from the fundamental group of the Riemann sphere punctured at the branch points into $S_N$
 obtained by lifting closed loops from the base to the covering.

 The Frobenius-Schur formula determines Hurwitz numbers $H(\mu^{(1)}, \dots, \mu^{(k)})$  in terms of
 irreducible characters of the symmetric group $S_N$
 \be
 H(\mu^{(1)}, \dots, \mu^{(k)}) = \sum_{\lambda, \ |\lambda|=N} (h(\lambda))^{k-2} \prod_{i=1}^k {\chi_\lambda(\mu^{(i)}) \over z_{\mu^{(i)}}},
 \label{Frob_Schur_Hurwitz}
 \ee
 where $\chi_\lambda(\mu)$ is the character of the irreducible representation 
 corresponding to the partition $\lambda$ of $N$ evaluated on the conjugacy class
 $\cyc(\mu) \ss S_N$ with cycle structure given by the partition $\mu$, $h(\lambda)$ is the product of 
 hook lengths of $\lambda$ and 
 \be
 z_\mu = \prod_{i=1}^{\mu_1} m_i(\mu))! i^{m_i(\mu)}
 \label{z_mu_def}
 \ee
 is the order of the stability subgroup of any element of the conjugacy class $\cyc(\mu)$,
 where $m_i(\mu)$ is the number of parts of $\mu$ equal to $i$.
Computing $H(\mu^{(1)}, \dots, \mu^{(k)})$ using this formula requires the character table for $S_N$, 
which becomes increasingly computationally complex \cite{Be} for rising $N$ and $k$.

To define {\em weighted} Hurwitz numbers, we introduce a weight generating function $G(z)$, either as an infinite sum
\be
G(z) = 1 + \sum_{i=1}^\infty g_i z^i
\label{G_z_taylor}
\ee
or an infinite product
\be
G(z) =\prod_{i=1}^\infty (1 + c_i z)
\label{G_z_prod}
\ee
or, in dual form
\be
\tilde{G}(z) = \prod_{i=1}^\infty (1-d_i z)^{-1},
\label{G_tilde_z_prod}
\ee
which may also be developed as an infinite sum, 
\be
\tilde{G}(z)= 1+ \sum_{i=1}^\infty \tilde{g}_i z^i,
\label{G_tilde_z_taylor}
\ee
either formally, or under suitable convergence conditions imposed upon the  parameters 
$\{c_i\}$, $\{d_i\}_{i\in \Nb^+}$ or the Taylor coefficients  $\{g_i\}_{i\in \Nb^+}$,   $\{\tilde{g_i}\}_{i\in \Nb^+}$.
The independent parameters determining the weighting may be viewed as  any of these.
Cases (\ref{G_z_prod}) and (\ref{G_tilde_z_prod}) may be viewed
as generating functions for elementary $\{e_i\}_{i\in \Nb}$ and complete symmetric functions 
$\{h_i\}_{i\in \Nb}$,, respectively, giving
\be
g_i = e_i({\bf c}), \quad \tilde{g}_i = h_i({\bf d}), \quad i\in \Nb^+,
\ee
where ${\bf c} =(c_1, c_2, \dots)$, ${\bf d} =(d_1, d_2, \dots)$.
  
The particular case of exponential weight generating function
\be
G(z) = \exp(z) = e^z
\ee
is of special interest, since it corresponds to a Dirac measure on the space of $k$-tuples of
partitions $\{\mu^{(1)}, \dots, \mu^{(k)}\}$, supported uniformly on $k$-tuples of $2$-cycles
\be
\mu^{(i)}  = (2, (1)^{N-2}),\quad i=1, \dots , k
\ee
which, from the viewpoint of enumeration of branched covers of the Riemann sphere, corresponds to
{\em simple} branching \cite{Ok, Pa}.
A number of other special cases  of particular interest will be considered
in  the following. 

The first of these is the family of rational weight generating functions 
\be
G_{{\bf c}, {\bf d}}(z) :={ \prod_{i=1}^L (1+c_i z) \over \prod_{j=1}^M(1-d_j z)},
\label{G_ratl_c_d}
\ee
for an arbitrary set of $L+M$ complex numbers $\{c_1, \dots, c_L, d_1, \dots , d_M\}$.
The Taylor series coefficients for this case are
\be
g_i({\bf c}, {\bf d}):= \sum_{j=0}^i e_j({\bf c}) h_{i-j}({\bf d}),
\label{rational_g_i}
\ee
where ${\bf c} =(c_1, \dots, c_L)$, ${\bf d}=(d_1, \dots, d_M)$.
Particular cases include:  linear or quadratic polynomials $(L,M) =(1,0)$ or $(2,0)$,
which correspond to two or three branch points (Belyi curves) \cite{Z, KZ} and the case $(L,M)=(0,1)$ which,
in the double weighted Hurwitz number case (which actually corresponds to a 2D Toda $\tau$-function),
is equivalent to the Harish-Chandra-Itzykson-Zuber matrix integral \cite{HC,IZ} as generating function
\cite{GGN1}. 

Also of particular interest is the case of the  $q$-exponential function
\be
G(z)=H_q(z):=\prod_{i=0}^\infty(1-q^i z)^{-1} =: e_q\left(z(1-q)^{-1} \right),
\label{Hq_def}
\ee
which is the weight generating function for (one  version of) {\em quantum Hurwitz numbers} \cite{GH2,H2},
whose Taylor coefficients are given by
\be
g_i(q):= {1\over (q\, ;q)_i}, \quad i \in \Nb^+,
\label{g_qqi}
\ee
where $(q\, ;q)_i$ is the $q$-Pochhammer symbol evaluated at $(q,q)$:
\be
(q\, ;q)_i := (1-q)\cdots (1-q^i).
\label{qqi_def}
\ee

\begin{definition}[Weighted Hurwitz numbers]
For the case (\ref{G_z_prod}), choosing a positive integer $d$ and a fixed partition $\mu$
of weight  $|\mu| =N$, the weighted (single) Hurwitz number $H^d_G(\mu)$ is defined as the weighted sum over all $k$-tuples $(\mu^{(1)}, \dots, \mu^{(k)})$
\be
H^d_G(\mu) := 
\sum_{k=1}^d  \sideset{}{'}\sum_{\mu^{(1)}, \dots \mu^{(k)}, \atop  |\mu^{(i)}| =N , \ \sum_{i=1}^k \ell^*(\mu^{(i)}) =d} 
\WW_G(\mu^{(1)}, \dots, \mu^{(k)}) H(\mu^{(1)}, \dots, \mu^{k)}, \mu),
\label{H_d_G_def}
\ee
where $\sideset{}{'}\sum$ denotes a sum over all  $k$-tuples of partitions 
$\{\mu^{(1)}, \dots, \mu^{(k)}\}$ of $N$ other than the cycle type of the identity element $(1^N)$,
\be
\ell^*(\mu^{(i)}) := |\mu^{(i)}| - \ell(\mu^{(i)})= N - \ell(\mu^{(i)})
\ee
is the {\em colength} of the partition $\mu^{(i)}$, and the weight factor is defined to be
\bea
 \WW_G(\mu^{(1)}, \dots, \mu^{(k)}) &\&:=
 {1\over k!}
 \sum_{\sigma \in S_{k}} 
 \sum_{1 \le b_1 < \cdots < b_{k } } 
  c_{b_{\sigma(1)}}^{\ell^*(\mu^{(1)})} \cdots c_{b_{\sigma(k)}}^{\ell^*(\mu^{(k)})} \cr
 &\&= {|\aut(\lambda)|\over k!} m_\lambda ({\bf c}).
 \label{WG_def}
\eea
Here  $m_\lambda ({\bf c}) $ is the monomial symmetric function of the parameters \hbox{${\bf c}:= (c_1, c_2, \dots)$}
\be
m_\lambda ({\bf c}) = {1\over |\aut(\lambda)|}\sum_{\sigma \in S_{k}} \sum_{1 \le b_1 < \cdots < b_{k }}
 c_{b_{\sigma(1)}}^{\lambda_1} \cdots c_{b_{\sigma(k)}}^{\lambda_{k}} ,
  \label{m_lambda}
\ee
 indexed by the partition $\lambda$ of weight $|\lambda|=d$ and length  $\ell(\lambda) =k$, whose 
parts $\{\lambda_i\}$  are equal to the colengths $\{\ell^*(\mu^{(i)})\}$ (expressed in weakly decreasing order) 
\be
\{\lambda_i\}_{i=1, \dots k} \sim \{\ell^*(\mu^{(i)})\}_{i=1, \dots k},  \quad \lambda_1 \ge \cdots \ge \lambda_k >0
\label{lambda_colengths_mu}
\ee
and
 \be
 |\aut(\lambda)|:=\prod_{i\geq 1} m_i(\lambda)!
 \label{def_aut_lambda}
 \ee
where $m_i(\lambda)$ is the number of parts of $\lambda$ equal to~$i$. 
We similarly denote by
\be
\tilde{H}^d_G(\mu) := 
\sum_{k=1}^d \sideset{}{'}\sum_{\mu^{(1)}, \dots \mu^{(k)}, \atop  |\mu^{(i)}| =N, \ \sum_{i=1}^k \ell^*(\mu^{(i)}) =d} 
\WW_G(\mu^{(1)}, \dots, \mu^{(k)}) \tilde{H}(\mu^{(1)}, \dots, \mu^{k)}, \mu)
\label{H_d_G_connected_def}
\ee
the {\em connected} weighted Hurwitz numbers corresponding to the weight generating function $G(z)$.
\end{definition}

Recall that the sum 
\be
d :=\sum_{i=1}^k\ell^*(\mu^{(i)})
\label{d_def}
\ee
of the colengths of the ramification profiles $\{\mu^{(i)}\}$ of the weighted branch points
determines the Euler characteristic $\chi$ of the covering curve through the Riemann-Hurwitz formula
\be
\chi =  N +\ell(\mu) -d,
\label{riemann_hurwitz}
\ee
which for connected coverings $\CC\ra \Pb$ is related, as usual, to the genus $g$ by
\be
\chi = 2 - 2g.
\label{chi_g}
\ee

\begin{remark}
Note that we can rearrange the sum \eqref{H_d_G_def} as follows;
\be
H^d_G(\mu) = \sum_{\lambda:\ |\lambda|=d} m_\lambda({\bf c}) \frac {|\aut(\lambda)|}{\ell(\lambda)!} \sum_{\{\ell_1,\dots, \ell_k\} \in Ana(\lambda)} \sum_{\ell^*(\mu^{(j))}=\ell_j:\atop |\mu^{(j)}|=|\mu|} H(\mu^{(1)}, \mu^{(2)},\dots, \mu^{(k)}, \mu)
\ee
where $k=\ell(\lambda)$. Here $Ana(\lambda) =\{(\ell_1, \dots, \ell_k)\}$ denotes the distinct rearrangements  
(or {\rm anagrams}) of the numbers $\lambda_1,\dots, \lambda_k$. But there are precisely 
$\frac {|\ell(\lambda)|!} {|\aut(\lambda)|}$ such anagrams, so if we denote 
\be
F_\lambda(\mu):= \sum_{\ell^*(\mu^{(j))}=\ell_j\atop |\mu^{(j)}|=|\mu|} H(\mu^{(1)}, \mu^{(2)},\dots, \mu^{(k)}, \mu),
\label{F_lambda_mu_def}
\ee
it is clear that $F_\lambda(\mu)$ is invarant under permutations of the $\ell_j$'s,.  The counting over anagrams 
therfore exactly cancels the factor  ${|\aut|\over k!}$, and (\ref{H_d_G_def}) is equivalent to
\be
H^d_G(\mu) = \sum_{\lambda, \ |\lambda|=d} m_\lambda({\bf c}) F_\lambda(\mu).
\ee
The same is true, of course, for the connected version $\wt H^d_G(\mu)$, with $F_\lambda(\mu)$ replaced by
\be
\tilde{F}_\lambda(\mu):= \sum_{\ell^*(\mu^{(j))}=\ell_j\atop |\mu^{(j)}|=|\mu|} \tilde{H}(\mu^{(1)}, \mu^{(2)},\dots, \mu^{(k)}, \mu).
\label{F_lambda_mu_connected_def}
\ee
\end{remark}

For the dual generating function $\tilde{G}(z)$, the weighted Hurwitz numbers are defined as
\be
H^d_{\tilde{G}}(\mu) := 
\sum_{k=1}^d \sum_{\mu^{(1)}, \dots \mu^{(k)}, \  |\mu^{(i)}| =N  \atop \sum_{i=1}^k \ell^*(\mu^{(i)}) =d} 
\WW_{\tilde{G}}(\mu^{(1)}, \dots, \mu^{(k)}) H(\mu^{(1)}, \dots, \mu^{k)}, \mu),
\label{H_d_G_tilde_def}
\ee
where
\bea
 \WW_{\tilde{G}}(\mu^{(1)}, \dots, \mu^{(k)}) &\&:=
 {(-1)^{\sum_{i=1}^k\ell^*(\mu^{(i)})-k}\over k!}
 \sum_{\sigma \in S_{k}} 
 \sum_{1 \le b_1\cdots \le b_k}
  d_{b_{\sigma(1)}}^{\ell^*(\mu^{(1)})} \cdots d_{b_{\sigma(k)}}^{\ell^*(\mu^{(k)})} \cr
 &\&= {|\aut(\lambda)|\over k!}  f_\lambda ({\bf d}).
 \label{WG_tilde_def}
\eea
Here  $f_\lambda ({\bf d}) $ is the ``forgotten''  symmetric function \cite{Mac} of the parameters 
\hbox{${\bf d}:= (d_1, d_2, \dots )$}
\be
f_\lambda ({\bf d}) = {(-1)^{\ell^*(\lambda)}\over |\aut(\lambda)|}\sum_{\sigma \in S_{k}} \sum_{1 \le b_1 \le \cdots \le b_{k }}
 d_{b_{\sigma(1)}}^{\lambda_1} \cdots d_{b_{\sigma(k)}}^{\lambda_{k}} ,
  \label{m_lambda}
\ee
 indexed again by the partition $\lambda$ of weight $|\lambda|=d$ and length  $\ell(\lambda) =k$, whose 
parts $\{\lambda_i\}$  are equal to the colengths $\{\ell^*(\mu^{(i)})\}$, as in (\ref{lambda_colengths_mu}).

For  rational weight generating functions $G_{{\bf c}, {\bf d}}(z)$ as defined  in (\ref{G_ratl_c_d}),
the weighted Hurwitz numbers are
\be
H^d_{G_{{\bf c}, {\bf d}}} := 
\sum_{k=1}^d \sum_{\mu^{(1)}, \dots \mu^{(k)}, \  |\mu^{(i)}| =N  \atop \sum_{i=1}^k \ell^*(\mu^{(i)}) =d} 
\WW_{G_{{\bf c}, {\bf d}}}(\mu^{(1)}, \dots, \mu^{(k)}) H(\mu^{(1)}, \dots, \mu^{k)}, \mu)
\label{H_d_G_rat_def}
\ee
where the weight factor is defined to be
\bea
 \WW_{G_{{\bf c},{\bf d}}}(\mu^{(1)}, \dots, \mu^{(k)}):=
{(-1)^{\sum_{i=1}^k\ell^*(\mu^{(i)})-k} \over k!}
&\& \sum_{\sigma , \sigma'\in S_{k}}   \sum_{1 \le a_1 < \cdots < a_{k } \le L\atop 1 \le b_1\cdots \le b_k\le M } 
  c_{a_{\sigma(1)}}^{\ell^*(\mu^{(1)})} \cdots c_{a_{\sigma(k)}}^{\ell^*(\mu^{(k)})}  \cr
  &\& \cr
&\&\times 
  d_{b_{\sigma'(1)}}^{\ell^*(\mu^{(1)})} \cdots d_{b_{\sigma'(k)}}^{\ell^*(\mu^{(k)})}.
 \label{WG_cd_def}
\eea

Finally, for the case of the generating function $H_q(z)$ defined  in ({\ref{Hq_def}}), the 
quantum weighted Hurwitz number is
\be
H^d_{H_q}(\mu) := 
\sum_{k=1}^d \sum_{\mu^{(1)}, \dots \mu^{(k)}, \  |\mu^{(i)}| =N  \atop \sum_{i=1}^k \ell^*(\mu^{(i)}) =d} 
\WW_{H_q}(\mu^{(1)}, \dots, \mu^{(k)}) H(\mu^{(1)}, \dots, \mu^{(k)}, \mu),
\label{H_d_Hq_def}
\ee
where the weight factor is 
\bea
 \WW_{H_q}(\mu^{(1)}, \dots, \mu^{(k)}):= \sum_{\sigma \in S_k} \prod_{j=1}^k
 {1  \over (1 - q^{\sum_{i=1}^j\ell^*(\mu^{(\sigma(i)})})}.
 \label{WHq_def}
\eea

  In order to compute the weighted Hurwitz numbers for any given weight generating function $G(z)$,
  it is necessary not only to know the pure Hurwitz numbers $H(\mu^{(1)}, \dots, \mu^{(k)}, \mu)$
  or  $\tilde{H}(\mu^{(1)}, \dots, \mu^{(k)}, \mu)$ entering in the sums (\ref{H_d_G_def}),
  (\ref{H_d_G_connected_def}), but  also to express the  weights ({\ref{WG_def}) or ({\ref{WG_tilde_def}) 
  as weighted homogeneneous polynomials in the Taylor coefficients $\{g_i\}_{i \in \Nb^+}$ 
  or $\{\tilde{g}_i\}_{i \in \Nb^+}$ of $G$ or $\tilde{G}$, respectively. This requires the transition matrices \cite{Mac}
  relating the various bases $\{m_\lambda\}$, $\{f_\lambda\}$ $\{e_\lambda\}$, $\{h_\lambda\}$ of the ring of symmetric functions  which, in turn,  require the Kostka matrices $K_{\lambda \mu}$ which are
   the transition matrices between the Schur function bases  $\{s_\lambda\}$  and  the monomial symmetric 
   function bases $\{m_\lambda\}$.  
   The latter is readily computable for any given weight $d$, but no known formula exists.  Therefore, the computation 
   of the weighted Hurwitz numbers $H^d_G(\mu)$ and their connected version  $\tilde{H}^d_G(\mu)$  directly from their definitions 
is a lengthy and complex task. As will be shown in the following, however, this may all be circumvented through
 the use of generating functions.
  
  \begin{remark}
Unlike the direct computation, based on the Frobenius-Schur formula (\ref{Frob_Schur_Hurwitz}),
the computation of weighted Hurwitz numbers $H^d_G(\mu)$ does not require any knowledge of the $S_N$
  group characters or the Kostka matrices,  and this is part of the reason for the more rapid computational algorithm 
  the generating function approach provides.
  
   However, it should be noted that knowing the weighted Hurwitz numbers $H^d_G(\mu)$, even for the most 
   general choice of weight generating functions $G(z)$, does not provide sufficient information to uniquely retrieve 
   the individual pure Hurwitz numbers $\{H(\mu^{(1)}, \dots, \mu^{(k)})\}$.
 This is because the only information retained in the definition of the weighted average (\ref{H_d_G_def})
is the partition $\lambda$ of length $k=\ell(\mu)$ and weight $|\lambda|=d$ whose parts are the 
colengths $\{\ell(\mu^{i)}\}_{i=1, \dots, k}$.

 Thus, knowing the weighted Hurwitz numbers allows us only to reconstruct the values of
 the pure Hurwitz numbers summed, as in (\ref{F_lambda_mu_def}), over all $k$-tuples of partitions $\{\mu^{(1)}, \dots, \mu^{(k)}\}$
 having the same sequence of colengths $\{\ell^*(\mu^{(1)}), \dots, \ell^*(\mu^{(k)})\}$. These 
 can be determined from the weighted ones only in the case  of {\em simple} Hurwitz numbers, 
since these are the only ones in the class where the colengths are $(1, 1, \dots, 1)$.
In general, there is no other combination of partitions for which the same holds true.
  \end{remark}

%%%%%%%%%%%%%  Subsection 2.2. Hypergeometric $\tau$ functions as generating functions %%%%%%%%

\subsection{Hypergeometric $\tau$-functions as generating functions of weighted Hurwitz numbers}

We recall the definition of  KP $\tau$-functions of hypergeometric type \cite{OrSc}, which serve as generating functions for weighted Hurwitz numbers \cite{GH2, HO, H1}.
For any weight generating function $G(z)$ of the above type, and nonzero parameter $\beta$,
we define two doubly infinite sequences of numbers $\{r_i^{(G, \beta)}, \rho_i\}_{i \in \Zb}$,
labeled by integers
\bea
r^{(G, \beta)}_i &\&:= \beta G( i\beta), \quad i \in \Zb,  \quad \rho_0 =1,
\label{r_G_beta_i_def} \\
\rho_i &\& := \prod_{j=1}^i r^{(G, \beta)}_j,\quad  \rho_{-i}:=\prod_{j=0}^{i-1}( r^{(G, \beta)}_{-j})^{-1}, 
\quad i\in \Nb^+,
\label{rho_i_def}
\eea
related by
\be
r_i^{(G, \beta)} = {\rho_i \over \rho_{i-1}},
\ee
where $\beta$ is viewed as a small parameter for which $G(i\beta)$ does not vanish for
any integer $i\in  \Zb$. (It is possible to extend this by requiring that, if $G(i\beta)=0$ for some smallest positive
integer $i\in \Nb^+$, then $\rho_j:=0$ for all $j\ge i$.

For the exponential generating function $G(z)=e^z$, we have
\be
r_i = \beta e^{i \beta}, \quad \rho_i = \beta^i e^{{1\over 2} i (i+1)\beta} ,\quad \rho_{-i} =  \beta^{-i} e^{{1\over 2} i(i-1)\beta}, \quad 
g_i = {1\over i!}.
\ee

For rational weight generating functions $G_{{\bf c}, {\bf d}}(z)$ as defined in (\ref{G_ratl_c_d}),
the parameters $\{\rho_i\}_{i\in \Zb}$, and coefficients $\{g_i\}_{i\in \Nb}$ become
\bea
\rho_i := \beta^i\prod_{k=1}^i {\prod_{i=1}^L (1+k c_i \beta) \over \prod_{j=1}^M (1-kd_j\beta) }, &\&
\quad  \rho_{-i} := \beta^{-i}\prod_{k=1}^{i-1} {\prod_{j=1}^M (1+kd_j \beta) \over \prod_{i=1}^L(1-k c_i \beta) }, 
\quad i\in \Zb ,
\label{rho_c_b_i_def}
\cr
&&\\
g_i &\&= \sum_{j=0}^i e_j({\bf c}) h_{i-j}({\bf d}).
\label{gi_c_b_def}
\eea
For the quantum exponential generating function $H_q(z)$  defined in (\ref{Hq_def}),
the parameters $\{\rho_i\}_{i\in \Zb}$ and coefficients $\{g_i\}_{i\in \Nb}$ become
\bea
\rho_i := \beta^i\prod_{k=1}^i \prod_{j=0}^\infty  (1- k q^j \beta)^{-1}&\&
\quad  \rho_{-i} := \beta^i\prod_{k=1}^{i-1} \prod_{j=0}^\infty  (1+ k q^j \beta),
\quad i\in \Zb ,
\label{rho_Hq_def}
\cr
&\& \\
g_i &\&= {1\over (q;q)_i}.
\label{gi_Hq_def}
\eea

For each partition $\lambda$ of $N$, we define the associated {\em content product} coefficient
\be
r^{(G, \beta)}_\lambda := \prod_{(i,j) \in \lambda} r^{(G, \beta)}_{j-i}.
\ee
The KP $\tau$-function of hypergeometric type associated to these parameters is defined as the Schur function series \cite{HO,GH1, GH2}:
\be
\tau^{(G, \beta)}({\bf t}) 
:=\sum_{N=0}^\infty \sum_{{\lambda}, \ |\lambda|=N}  
(h(\lambda))^{-1} r_\lambda^{(G, \beta)}  s_\lambda({\bf t}),
\label{tau_G_beta_schur_series}
\ee
where $h(\lambda)$ is the product of the hook lengths of the partition $\lambda$
 and \hbox{${\bf t}=(t_1, t_2 \dots)$} is the infinite sequence of KP flow parameters, which may be equated to the 
 sequence  of normalized power sums $(p_1, {1\over 2} p_2, \dots )$
\be
t_i := {1\over i} \sum_{a} x_a^i =  {1\over i} p_i
\ee
 in a finite or infinite set of auxiliary variables $(x_1, x_2, \dots )$.

Using the Schur character formula \cite{FH, Mac}
  \be
  s_\lambda = \sum_{\mu, \ |\mu|=|\lambda| } {\chi_\lambda(\mu) \over z_\mu} p_\mu,
  \label{schur_character_formula}
  \ee
  where $\chi_\lambda(\mu)$ is the irreducible character  determined by $\lambda$ evaluated on the conjugacy class 
  $\cyc(\mu)$, $z_{\mu}$  is the order of the stabilizer of the elements of this conjugacy class,  as given in (\ref{z_mu_def}), and
  \be
  p_\mu := \prod_{i=1}^{\ell(\mu)} p_{\mu_i}
  \label{power_sum_mu}
  \ee
  is the power sum symmetric function corresponding to partition $\mu$, we may re-express the Schur function
  series (\ref{tau_G_beta_schur_series}) as an expansion  in the basis $\{p_\mu\}$ of power sum symmetric functions.
As  shown in \cite{GH1, H1, HO, GH2}, this gives the following:
  \begin{theorem}
  \label{tau_gener_fn_weighted_hurwitz}
  The $\tau$-function $\tau^{(G, \beta)}({\bf t}) $ may equivalently be expressed as
 \be
\tau^{(G, \beta)}({\bf t}) 
=\sum_{\mu}\sum_{d=0}^\infty  \beta^d H^d_G (\mu) p_\mu({\bf t}).
 \label{tau_G_beta_power_sum_series}
 \ee
It is thus a generating function for the weighted Hurwitz numbers $H^d_G (\mu) $.
This holds similarly for the dual weight generating function $\tilde{G}(z)$ and the weighted
Hurwitz numbers $H^d_{\tilde{G}}(\mu)$
\be
\tau^{(\tilde{G}, \beta)}({\bf t}) 
=\sum_{\mu} \sum_{d=0}^\infty  \beta^d H^d_{\tilde{G}} (\mu) p_\mu({\bf t}).
 \label{tau_G_tilde_beta_power_sum_series}
 \ee
     \end{theorem}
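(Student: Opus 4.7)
The plan is to start from the defining Schur-function series (\ref{tau_G_beta_schur_series}) for $\tau^{(G,\beta)}({\bf t})$, convert it to the power-sum basis via the Frobenius character formula (\ref{schur_character_formula}), and identify the coefficient of each $p_\mu({\bf t})$ as the desired $\beta$-series $\sum_d \beta^d H^d_G(\mu)$. Substituting (\ref{schur_character_formula}) into (\ref{tau_G_beta_schur_series}) and interchanging the order of summation gives
\begin{equation}
\tau^{(G,\beta)}({\bf t}) = \sum_{\mu} p_\mu({\bf t}) \sum_{\lambda,\,|\lambda|=|\mu|} \frac{r_\lambda^{(G,\beta)} \chi_\lambda(\mu)}{h(\lambda)\, z_\mu},
\end{equation}
so the task reduces to showing that the inner sum equals $\sum_{d\geq 0}\beta^d H^d_G(\mu)$ for each $\mu$.

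Next I would expand the content product as a formal series in $\beta$. Writing $G(z)=\prod_k(1+c_kz)$, one has
\begin{equation}
r_\lambda^{(G,\beta)} = \beta^{|\lambda|} \prod_{(i,j)\in\lambda}\prod_k \bigl(1+c_k(j-i)\beta\bigr),
\end{equation}
and collecting terms of total $\beta$-degree $N+d$ produces summands indexed by a choice of indices $b_1<\cdots<b_k$ together with an assignment of colength exponents $\ell_j=\ell^*(\mu^{(j)})$ summing to $d$. The resulting coefficient $c_{b_{\sigma(1)}}^{\ell_1}\cdots c_{b_{\sigma(k)}}^{\ell_k}$ is precisely the one appearing in (\ref{WG_def}), so after symmetrization the $c$-dependent data of the expansion reproduces the weight factor $\WW_G(\mu^{(1)},\dots,\mu^{(k)})$, while the content factors $(j-i)^{\ell_j}$ aggregate into a symmetric polynomial in the contents of $\lambda$.

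The remaining ingredient is the content-to-character identification: for any tuple of partitions $\mu^{(1)},\dots,\mu^{(k)}$ of $N$ with $\sum_j\ell^*(\mu^{(j)})=d$, the relevant symmetric polynomial in the contents $\{j-i:(i,j)\in\lambda\}$ coincides with the eigenvalue on $s_\lambda$ of a product of class-sum central elements of $\Cbb[S_N]$. Concretely, each conjugacy class $\cyc(\mu^{(j)})$ acts on $s_\lambda$ by the eigenvalue $h(\lambda)\chi_\lambda(\mu^{(j)})/z_{\mu^{(j)}}$, and expanding the product of such eigenvalues and pairing with $\chi_\lambda(\mu)/h(\lambda)$ produces, via the Frobenius-Schur formula (\ref{Frob_Schur_Hurwitz}), exactly the pure Hurwitz number $H(\mu^{(1)},\dots,\mu^{(k)},\mu)$.

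Putting the pieces together and recognizing (\ref{H_d_G_def}) on both the weight factor and the pure Hurwitz number side, one obtains $\sum_d \beta^d H^d_G(\mu)$ as the coefficient of $p_\mu({\bf t})$. The proof for $\tau^{(\tilde{G},\beta)}$ is structurally identical, using $\tilde{G}(z)=\prod_k(1-d_kz)^{-1}$, expanding each geometric factor as $\sum_{m\geq 0}(d_k(j-i)\beta)^m$, and recognizing the ``forgotten'' symmetric functions $f_\lambda({\bf d})$ that define $\WW_{\tilde G}$ in (\ref{WG_tilde_def}). The main obstacle is the content-to-character step: the $\beta$-expansion of $r_\lambda^{(G,\beta)}$ is mechanical, but translating sums of products of contents of $\lambda$ into characters $\chi_\lambda(\mu^{(j)})$ requires either the Jucys-Murphy eigenvalue formalism or, equivalently, a careful combinatorial verification that sorting the expansion by the partition $\lambda'=(\ell_1,\dots,\ell_k)$ of $d$ produces exactly the product $m_{\lambda'}({\bf c})\cdot F_{\lambda'}(\mu)$ of (\ref{F_lambda_mu_def}).
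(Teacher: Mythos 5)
Your proposal is correct and follows essentially the intended argument: the paper itself states this theorem without proof, citing \cite{GH1,H1,HO,GH2}, and your route (substitute the Frobenius character formula into the Schur expansion, expand the content product $r_\lambda^{(G,\beta)}$ in $\beta$, identify the resulting symmetric polynomials in the contents with eigenvalues of products of class sums acting by $h(\lambda)\chi_\lambda(\mu^{(j)})/z_{\mu^{(j)}}$, and collapse the $\lambda$-sum via the Frobenius--Schur formula) is precisely the proof given in those references. The one ingredient you name as the ``main obstacle'' but do not state is exactly Jucys' identity $e_m(J_1,\dots,J_N)=\sum_{\mu:\,\ell^*(\mu)=m}C_\mu$ for the Jucys--Murphy elements, which is the lemma that turns the elementary symmetric functions of the contents arising from $G(z)=\prod_k(1+c_kz)$ (respectively the complete symmetric functions for $\tilde G$) into the sums over conjugacy classes of fixed colength that appear in (\ref{H_d_G_def}).
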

    
    Particular cases of weighted Hurwitz numbers for which hypergeometric  $\tau$-functions correspond to the weight generating functions $G(z)$ defined above include:  {\em simple} Hurwitz numbers  \cite{Ok, Pa} (both single and double), with weight generating function $G(z)=e^z$; {\em weakly monotone} Hurwitz numbers \cite{GGN1} (or, equivalently, {\em signed} Hurwitz numbers \cite{HO, GH2}),  with weight generating function $G(z) = {1\over 1-z}$;   {\em strongly monotone} Hurwitz numbers \cite{GH1, GH2}, with weight generating function $G(z) =1 + z$ (or,  equivalently, weighted Hurwitz numbers for Belyi curves and {\em dessins d'enfant} \cite{AC1, KZ, Z});  polynomially weighted Hurwitz numbers \cite{ AMMN, HO, ACEH1, ACEH2, ACEH3}; 
{\em quantum} Hurwitz numbers \cite{H1, GH2, H2} and  {\em multispecies} Hurwitz numbers \cite{H3}.

 %%%%%%%%%%%%%%%%  Subsection 2.3. The pair correlator $K_G(x,y) %%%%%%%%%%%
\subsection{The pair correlator $K^G(x,y)$}

An essential r\^ole in the computation of weighted Hurwitz numbers 
 is  played by the {\em pair correlator} associated to the $\tau$-function, defined as
\be
K^G(x,y) := {\tau^{(G, \beta)}([x] -[y])\over x-y},
\ee
where $[x]$ denotes the infinite sequence
\be
[x] = (x, x^2/2, x^3/3, \dots, x^n/n, \dots).
\label{rect_x_def}
\ee
This has the following series expansion \cite{ACEH1, ACEH3}
\be
K^G(x,y)  ={1 \over x - y} +  \beta^{-1} \sum_{a=0}^\infty \sum_{b=0}^\infty (-1)^b
{\rho_a  \rho^{-1}_{-b-1} \over a! b!(1+a+b)} \left({x\over \beta}\right)^a \left({y\over \beta}\right)^b.
\label{Khook} 
\ee
Defining
\be
\rho_{ab}(\beta) := (-1)^{b}\prod_{i=-b}^a {G(i \beta) \over a! b! (a+b+1)},
\label{rho_ab}
\ee
we have
\be
\rho_{ab}(\beta) = (-1)^{a+b}\rho_{ba}(-\beta)
\label{rho_ab_sym}
\ee
and the nonsingular part of $K^G(x,y)$ is
\be
K_0^G(x,y)  := K^G(x,y) - {1\over x-y} = \sum_{a=0}^\infty \sum_{b=0}^\infty \rho_{ab} x^a y^b.
\label{K0_exp}
\ee

Taking the (formal) Taylor series expansion
\be
\rho_{ab}(\beta) := \sum_{d=0}^\infty \rho_{ab}^d \beta^d,
\label{rho_ab_series}
\ee
the coefficents $\rho_{ab}^d({\bf g})$ are weighted degree $d$ homogeneous polynomials in
the Taylor coefficents $\{g_i\}_{i\in \Nb^+}:={\bf g} $ of  $G(z)$.
It follows from (\ref{rho_ab_sym}) that they satisfy the symmetry conditions
\be
\rho_{ab}^d = (-)^{a+b+d} \rho^d_{ba}.
\label{rho_abd_sym}
\ee
Their values for  small values of $a,b, d$ are displayed in Tables \ref{rho_table1}- \ref{rho_table3} of Appendix \ref{rho_abc_coeffs}.

In particular, for the case of the rational weight generating function $G_{{\bf c}, {\bf d}}(z)$,
we have
\be
K^{G_{{\bf c}, {\bf d}}}(x,y)  ={1 \over x - y} + \sum_{a=0}^\infty \sum_{b=0}^\infty  (-1)^b
\left(\prod_{k=-b}^{a}{\prod_{i=1}^M (1 + ikc_i\beta)  
\over \prod_{j=1}^M (1 - k d_j\beta) }\right)
 {x^a y^b \over  a!b!(1+a+b)},
\label{Khook_c_d} 
\ee
 and for the quantum  exponential weight generating function $G_{H_q}(z)$,
\be
K^{H_q}(x,y)  ={1 \over x - y} + \sum_{a=0}^\infty \sum_{b=0}^\infty  (-1)^b
\left(\prod_{k=-b}^{a} \prod_{j=0}^\infty  (1- kq^j \beta)^{-1} \right)
 {x^a y^b \over  a!b!(1+a+b)}.
\label{Khook_c_d} 
\ee

%%%%%%%%%%%%  Subsection 2.4. The multicurrent correlator $W(x_1, \dots, x_n))$ %%%%%%%%%%%
\subsection{The multicurrent correlator $W_n(x_1, \dots, x_n)$ as generating function for  weighted Hurwitz numbers}
\label{multicurrent_correl}

Following \cite{ACEH1, ACEH2, ACEH3}, define the derivations:
\begin{definition}
For any parameter $x$
\be
\nabla(x) := \sum_{i=1}^\infty x^{i-1}{\partial \over \partial t_i}, \quad \tilde{\nabla}(x) := \sum_{i=1}^\infty{ x^{i}\over i}{\partial \over \partial t_i} 
\ee
\end{definition}
In terms of these, we introduce the following correlators
\bea
W^G_{n}( x_1,\dots,x_n )  := \left.\left( \Big(\prod_{i=1}^n \nabla(x_i) \Big)
\tau^{(G, \beta)}({\bf t})\right) \right\vert_{{\bf t}= {\bf 0}},
 \label{W_G_def}\\
 \tilde{W}^G_{n}( x_1,\dots,x_n )  := \left.\left( \Big(\prod_{i=1}^n \nabla(x_i) \Big)
\ln\,  \tau^{(G, \beta)}({\bf t}) \right) \right\vert_{{\bf t}= {\bf 0}},
\label{W_tilde_G_def} \\
F^G_{n}( x_1,\dots,x_n )  := \left.\left( \Big(\prod_{i=1}^n \tilde{\nabla}(x_i) \Big)
\tau^{(G, \beta)}({\bf t})\right) \right\vert_{{\bf t}= {\bf 0}},
 \label{F_G_def}\\
 \tilde{F}^G_{n}( x_1,\dots,x_n )  := \left.\left( \Big(\prod_{i=1}^n \tilde{\nabla}(x_i) \Big)
\ln\, \tau^{(G, \beta)}({\bf t}) \right) \right\vert_{{\bf t}= {\bf 0}},
\label{F_tilde_G_def}
\eea
which are related by
\bea
\label{eq:defWGn}
W^G_{n}(x_1,\dots,x_n) &\&= \frac{\partial}{\partial x_1}\dots \frac{\partial }{\partial x_n} F^G_{n}(x_1,\dots,x_n),\label{WGn_dFGn}
\\
\tilde{W}^G_{n}(x_1,\dots,x_n) &\&=\frac{\partial}{\partial x_1}\dots \frac{\partial}{ \partial x_n}\tilde{F}^G_{n}(x_1,\dots,x_n).
\label{tilde_WGn_dFGn}
\eea
As shown in \cite{ACEH2},  $W_n(x_1, \dots, x_n)$ has a fermionic representation  as a multicurrent correlator.
Moreover, we have the following result from refs.~\cite{ACEH1, ACEH3},  which shows that $F_n(x_1,\dots,x_n)$  and $\tilde{F}_n(x_1,\dots,x_n)$
may also serve as generating functions for weighted single Hurwitz numbers $H^d_G(\mu)$ and
 $\tilde{H}^{d}_G(\mu)$, respectively,  with  ramification profile  $\mu$  at the single marked branch 
 point of length $\ell(\mu)=n$. 
\begin{proposition}[\cite{ACEH1, ACEH3}]
\label{F_n_gen_fn}
\bea
F^G_n(x_1,\dots,x_n) &\&=  \sum_{d=0}^\infty \sum_{\mu \atop \ell(\mu)=n} \beta^d H_{G}^d(\mu)\, |\aut(\mu)| m_\mu(x_1,\dots,x_n),
\label{Fn_expansion}
\\ 
\tilde{F}^G_n(x_1,\dots,x_n) &\&:= \sum_{d=0}^\infty \sum_{\mu \atop \ell(\mu)=n}  \beta^d \tilde H^d_{G}(\mu)\,|\aut(\mu)|m_\mu(x_1,\dots,x_n) 
\label{tilde_Fn_expansion}
\\
&\&=  \sum_{g=0}^\infty  \beta^{2g-2+n} \tilde{F}^G_{g,n}(x_1, \dots, x_n), 
\eea
where
\be
\label{tilde_Fgn_expansion}
\tilde{F}^G_{g,n}(x_1,\dots,x_n) =  \sum_{\mu\atop \ell(\mu)=n}  \tilde H^{2g-2+n+|\mu|}_{G}(\mu)\, |\aut(\mu)| m_\mu(x_1,\dots,x_n)
\ee
and $m_\mu(x_1,\dots, x_n)$ is the monomial symmetric polynomial in the indeterminates $(x_1,\dots , x_n)$.
\end{proposition}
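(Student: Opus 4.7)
The plan is to start from the expansion of the $\tau$-function given by Theorem \ref{tau_gener_fn_weighted_hurwitz}, namely
\[
\tau^{(G,\beta)}({\bf t}) = \sum_{\mu}\sum_{d=0}^\infty \beta^d H^d_G(\mu)\, p_\mu({\bf t}),
\]
and simply apply the operator $\prod_{k=1}^n \tilde{\nabla}(x_k)$, followed by evaluation at ${\bf t}={\bf 0}$, to each power-sum $p_\mu({\bf t})$. The key technical fact is that, since $p_i = i\,t_i$, we have $p_\mu = \prod_{j=1}^{\ell(\mu)} \mu_j\, t_{\mu_j}$, and each derivation acts as $\tilde{\nabla}(x)\,t_j = x^j/j$. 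Hence $\prod_k \tilde{\nabla}(x_k)$ annihilates $p_\mu$ at ${\bf t}={\bf 0}$ unless $\ell(\mu)=n$, in which case exactly one derivative must hit each factor $t_{\mu_j}$.

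For $\ell(\mu)=n$, summing over the $n!$ ways to pair up derivations with factors, one obtains
\[
\Big(\prod_{k=1}^n \tilde{\nabla}(x_k) p_\mu\Big)\Big|_{{\bf t}={\bf 0}}
= \prod_{j=1}^n \mu_j \sum_{\sigma\in S_n} \prod_{k=1}^n \frac{x_k^{\mu_{\sigma(k)}}}{\mu_{\sigma(k)}}
= \sum_{\sigma\in S_n}\prod_{k=1}^n x_k^{\mu_{\sigma(k)}}
= |\aut(\mu)|\, m_\mu(x_1,\dots,x_n),
\]
where the last equality uses the fact that each monomial in $m_\mu$ is repeated exactly $|\aut(\mu)|$ times in the unrestricted sum over $S_n$. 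Substituting this into the expansion of $\tau^{(G,\beta)}$ immediately yields \eqref{Fn_expansion}.

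For the connected version \eqref{tilde_Fn_expansion}, I would invoke the standard cumulant/exponential-formula relation between disconnected and connected Hurwitz numbers: $\log\tau^{(G,\beta)}({\bf t}) = \sum_\mu \sum_{d\ge 0} \beta^d \tilde H^d_G(\mu)\, p_\mu({\bf t})$, which follows from the fact that exponentiating a sum of connected branched-cover contributions (each labeled by its monodromy factorization, weighted by $\WW_G$) produces the disconnected generating series with the same combinatorial weights; this is a direct consequence of the permutation-factorization interpretation of the pure Hurwitz numbers together with the multiplicativity of $\WW_G$ over disjoint cycle profiles. Applying exactly the same derivation calculation as above to $\log\tau$ in place of $\tau$ gives \eqref{tilde_Fn_expansion}.

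Finally, the genus reorganization \eqref{tilde_Fgn_expansion} is a bookkeeping exercise: for connected coverings with $\ell(\mu)=n$ and $|\mu|=N$, the Riemann--Hurwitz identity \eqref{riemann_hurwitz} combined with $\chi=2-2g$ from \eqref{chi_g} yields $d = 2g-2+n+|\mu|$, so grouping the terms in \eqref{tilde_Fn_expansion} by genus gives the stated power of $\beta$ and the claimed $\tilde F^G_{g,n}$. The only step that is not purely mechanical is the passage to the connected case, since it requires the cumulant identity; all remaining work is a direct symbolic computation that I would carry out in the order above.
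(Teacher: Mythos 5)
The paper does not actually prove this proposition---it is quoted from \cite{ACEH1, ACEH3}---so your derivation can only be measured against the standard argument in those references, and it matches it: apply $\prod_k\tilde\nabla(x_k)$ to the power-sum expansion (\ref{tau_G_beta_power_sum_series}) of $\tau^{(G,\beta)}$, observe that only partitions with $\ell(\mu)=n$ survive at ${\bf t}={\bf 0}$, and use $\sum_{\sigma\in S_n}\prod_k x_k^{\mu_{\sigma(k)}}=|\aut(\mu)|\,m_\mu(x_1,\dots,x_n)$; your computation here is correct, including the treatment of repeated parts. Two soft spots are worth flagging. First, the passage to (\ref{tilde_Fn_expansion}) rests entirely on the identity $\ln\tau^{(G,\beta)}({\bf t})=\sum_\mu\sum_d\beta^d\tilde H^d_G(\mu)p_\mu({\bf t})$, which you assert via the exponential formula; the justification you sketch (``multiplicativity of $\WW_G$ over disjoint cycle profiles'') is not immediate, since a disconnected cover distributes each branch profile $\mu^{(i)}$ over its components and one must check that the colength-indexed weights and the $1/|\aut(\CC)|$ normalization factor correctly through this decomposition---this is genuinely the content of the cited result, so your proof is circular-adjacent at exactly the one nontrivial step, though the step itself is true and standard. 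Second, in the genus regrouping, substituting $d=2g-2+n+|\mu|$ into (\ref{tilde_Fn_expansion}) produces $\beta^{2g-2+n+|\mu|}$, not $\beta^{2g-2+n}$ times the $\tilde F^G_{g,n}$ of (\ref{tilde_Fgn_expansion}); the extra $\beta^{|\mu|}$ must be absorbed by rescaling $x_i\mapsto\beta x_i$ inside $m_\mu$. This discrepancy is present in the proposition as stated, but your claim that the grouping ``gives the stated power of $\beta$'' silently skips it and should be made explicit.
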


Finally, we quote the following further result from \cite{ACEH1, ACEH3}, which expresses the connected multicurrent
correlators in terms of pair correlators.

\begin{proposition}
\label{prop:detConnected}
\be
\label{eq:detConnected1}
\tilde W^G_1(x) = \lim_{x'\to x} \left( K^G(x,x') - \frac{1}{x-x'}\right),
\ee
\be
\label{eq:detConnected2}
\tilde W^G_2(x_1,x_2) = \left( -K^G(x_1,x_2)K^G(x_2,x_1)  - \frac{1}{(x_1-x_2)^2}\right),
\ee
and for $n\ge 3$
\be\label{eq:detConnectedn}
\tilde W^G_n(x_1,\dots,x_n) =
\sum_{\sigma \in S_n^{\rm 1-cycle}}
\sgn(\sigma) \prod_{i} K^G(x_i,x_{\sigma(i)}),
\ee
where the last sum is over all permutations in $S_n$  consisting of a single $n$-cycle.
\end{proposition}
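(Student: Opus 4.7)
The plan is to use the fermionic (infinite-wedge) representation of the hypergeometric $\tau$-function together with Wick's theorem for free-fermion correlators, as in the references \cite{ACEH1, ACEH3}. One has an operator representation
\begin{equation*}
\tau^{(G,\beta)}({\bf t}) = \langle 0| e^{H({\bf t})}\hat R |0\rangle,
\end{equation*}
where $H({\bf t}) = \sum_{i\geq 1} t_i J_i$ is the KP current Hamiltonian and $\hat R$ is the diagonal dressing operator whose eigenvalue on the fermionic basis vector $|\lambda\rangle$ is the content product $r^{(G,\beta)}_\lambda$. Combined with the standard vertex-operator identity expressing $\psi(x)\psi^*(y)$ as an exponential of currents times a shift, one identifies the pair correlator as
\begin{equation*}
K^G(x,y) = \frac{\tau^{(G,\beta)}([x]-[y])}{x-y} = \frac{\langle 0|\psi(x)\psi^*(y)\hat R|0\rangle}{\langle 0|\hat R|0\rangle}.
\end{equation*}

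Next, note that $\nabla(x)$ acting on $\tau$ and evaluated at ${\bf t}=0$ corresponds to insertion of the normal-ordered current density $j(x) = :\psi(x)\psi^*(x):$. For the unconnected correlator $W_n^G$, Wick's theorem yields the determinantal identity
\begin{equation*}
W_n^G(x_1,\dots,x_n) = \det\bigl[K^G(x_i,x_j)\bigr]_{i,j=1,\dots,n},
\end{equation*}
modulo the diagonal regularization required to make sense of $K^G(x_i,x_i)$. Expanding this determinant as a signed sum over $S_n$ and passing to $\ln\tau$ amounts to the standard moment-cumulant relation, under which only the Wick contractions linking all $n$ points into a single loop survive. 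These are precisely the permutations that are single $n$-cycles, giving the formula \eqref{eq:detConnectedn}.

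The three regimes $n=1$, $n=2$, and $n\geq 3$ must then be handled separately, the only delicate point being the singular $(x-y)^{-1}$ piece of $K^G$. For $n\geq 3$, the cyclic product $\prod_i (x_i - x_{\sigma(i)})^{-1}$ along an $n$-cycle is regular on the open set where all $x_i$ are distinct, so no further correction arises. For $n=2$, the unique 2-cycle gives $-K^G(x_1,x_2)K^G(x_2,x_1)$, whose singular part produces a $+1/(x_1-x_2)^2$ double pole, and the normal-ordered cumulant subtracts this, yielding the $-1/(x_1-x_2)^2$ term in \eqref{eq:detConnected2}. For $n=1$, the degenerate ``$1$-cycle'' would give $K^G(x,x)$, which is divergent; the cumulant regularization subtracts the $1/(x-x')$ pole and returns the diagonal limit formula \eqref{eq:detConnected1}.

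The main obstacle is making the fermionic Wick expansion rigorous in the formal Sato--Grassmannian setup and carefully tracking the normal-ordering subtractions that produce the $n=1$ and $n=2$ corrections from the pole of $K^G$. Once those are in place, the passage from the determinantal $W_n^G$ to the cyclic $\tilde W_n^G$ follows from a routine moment-cumulant inversion on the symmetric group, and the claim in each regime falls out of the cycle-decomposition combinatorics of the surviving permutations.
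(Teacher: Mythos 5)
The paper does not actually prove Proposition \ref{prop:detConnected}: it is imported verbatim from \cite{ACEH1, ACEH3}, and the fermionic Wick-theorem route you sketch is precisely the argument of those references, so your plan matches the intended (external) proof. One imprecision is worth flagging: the unconnected correlator $W^G_n$ is \emph{not} literally $\det\bigl[K^G(x_i,x_j)\bigr]$ even after regularizing the diagonal entries. Comparing (\ref{eq:detConnected2}) with the cumulant relations (\ref{cumulant_Wn}) shows that every $2$-cycle in the permutation expansion carries its own $-1/(x_i-x_j)^2$ subtraction, not just the fixed points; these are exactly the c-number (normal-ordering) terms in the current--current OPE, which arise only for $1$-cycles and $2$-cycles, while a single $n$-cycle with $n\ge 3$ has neither, which is the precise reason no correction appears in (\ref{eq:detConnectedn}) --- a cleaner justification than your appeal to regularity away from the diagonals (regularity of $\tilde W^G_n$ as $x_i\to x_j$ follows separately from the cancellation of the pole parts in the sum over $n$-cycles). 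With the determinant claim replaced by the cycle-by-cycle statement, the remaining steps (Wick's theorem for the diagonal group element $\hat R$, and the moment--cumulant inversion matching set partitions to cycle decompositions) are standard and your outline is sound.
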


The corresponding nonconnected quantities are obtained  from the cumulant relations \cite{ACEH3}:
  \bea
W_1(x_1)  = &\&\tilde{W}_1(x_1 ) \cr
W_2(x_1, x_2) =  &\&\tilde{W}_2(x_1, x_2)  + \tilde{W}_1(x_1) \tilde{W}_1(x_2) \cr
  W_3(x_1, x_2, x_3)  = &\& \tilde{W}_3(x_1, x_2, x_3 ) 
   + \tilde{W}_1(x_1) \tilde{W}_2(x_2, x_3) \cr
   &\&+ \tilde{W}_1(x_2) \tilde{W}_2(x_1, x_3)   + \tilde{W}_1(x_3) \tilde{W}_2(x_1 x_2) \cr
   &\& + \tilde{W}_1(x_1) \tilde{W}_1(x_2) \tilde{W}_1(x_3)
  \label{tilde_W123_cumulant}
 \eea
or, more generally,or $n\ge 3$,
\bea
W_n(x_1,\dots,x_n)&=&
\sum_{\ell\geq 1} \sum_{I_1\uplus\dots\uplus I_\ell = \{1,\dots,n\}}
\prod_{i=1}^\ell \tilde{W}_{|I_i|}(x_j,j\in I_i),
\label{cumulant_Wn}
\eea
 with identical relations holding between the $\tilde{F}_n$'s and $F_n$'s .

%%%%%%%   Section 3. Generating function computation of $$H^2_G(mu) $   for $\ell(\mu)=1,2, 3$. %%%%%%%

\section{Generating function computation of $\tilde{H}^d_G(\mu)$}
\label{W_123_gener}

Using the series expansion (\ref{Khook_c_d}) for the pair correlator $K^G(x,y)$, 
and expressing the connected multicurrent correlators $\tilde{W}_n(x_1, \dots ,x_n)$ in terms of these via eqs.~(\ref{eq:detConnected1}), ({\ref{eq:detConnected2}), ({\ref{eq:detConnectedn}), we can compute all
weighted connected or nonconnected single Hurwitz numbers $\tilde{H}^d_G(\mu)$, $H^d_G(\mu)$ for coverings of any degree $N$, Euler characteristic  $\chi$ 
and unweighted partition $\mu$ of $N$,
as  homogeneous  polynomials of weighted  degree $d$ in the coefficients
$\{g_i\}_{i\in \Nb^+}$ of the series expansion of the weight generating function $G(z)$.

In particular, for $\ell(\mu) \le 3$, we have:
\begin{theorem}
\label{weighted_hurwitz_formulae}
[Weighted Hurwitz numbers for $\ell(\mu)=1,2$ or $3$]

The following formulae determine the connected weighted Hurwitz numbers
$\tilde{H}^d_G(\mu)$ for  $\mu$ of length $\ell(\mu) \le 3$.

$\ell((\mu_1)) =1$:
\be
\tilde{H}^d_G((\mu_1)) = H^d_G((\mu_1))= {1\over \mu_1}\sum_{a=0}^{\mu_1-1}\rho^d_{a, \mu_1 - a-1}.
\label{Hd1_G}
\ee

$\ell((\mu_1. \mu_2)) =2$:
\be
\tilde{H}^d_G((\mu_1, \mu_2))  = \tilde{H}^{(d,2)}_{(G,1)}((\mu_1, \mu_2))  +  \tilde{H}^{(d,2)}_{(G,2)}((\mu_1, \mu_2)),
\ee
where the linear part $\tilde{H}^{(d,2)}_{(G,1)}((\mu_1, \mu_2))$ is given by
\be
\tilde{H}^{(d,2)}_{(G,1)}((\mu_1, \mu_2)) = \small{{1 + (-1)^{d +|\mu |}\over \mu_1 \mu_2 |\aut(\mu_1,\mu_2)|}}
\sum_{b=0}^{\mu_2 -1}\rho^d_{b,\mu_1+ \mu_2- b- 1}
\label{Hd2_G1}
\ee
and the quadratic part $\tilde{H}^{(d,2)}_{G,2}((\mu_1, \mu_2))$ by
\be
\tilde{H}^{(d,2)}_{(G,2)}((\mu_1, \mu_2)) =- \small{{1\over \mu_1 \mu_2 |\aut(\mu_1,\mu_2)|}}
\sum_{a=0}^{\mu_1 -1}\sum_{b=0}^{\mu_2 -1}\sum_{j=0}^d\rho_{a, \mu_2-b-1}^j \rho^{d-j}_{b, \mu_1-a-1}.
\label{Hd2_G2}
\ee

$\ell((\mu_1, \mu_2, \mu_3)) =3$:
\be
\tilde{H}^{(d,3)}_G((\mu_1, \mu_2, \mu_3))  = 
\tilde{H}^{(d,3)}_{(G,1)}((\mu_1, \mu_2,, \mu_3))  +  \tilde{H}^{(d,3)}_{(G,2)}((\mu_1, \mu_2, \mu_3))
+ \tilde{H}^{(d,3)}_{(G,3)}((\mu_1, \mu_2,, \mu_3)),
\label{Hd3_sum}
\ee
where the linear part $\tilde{H}^{(d,3)}_{(G,1))}((\mu_1, \mu_2, \mu_3)) $ is given by
\be
\tilde{H}^{d}_{(G,1)}((\mu_1, \mu_2, \mu_3))=\small{{1 - (-1)^{d +|\mu |}\over \mu_1 \mu_2 \mu_3 |\aut(\mu_1,\mu_2, \mu_3)|}} \sum_{b=0}^{\mu_2 -1} \left(\rho^d_{\mu_2-b-1, \mu_1 + \mu_3 +b} 
-\rho^d_{\mu_1 + \mu_2 -b-1, \mu_3 +b}\right),
\label{Hd3_G1}
\ee
the quadratic part by 
\be
\tilde{H}^{(d,3)}_{(G,2)}((\mu_1, \mu_2, \mu_3)) = K^d_{(G,3)}(\mu_1, \mu_2; \mu_3) + K^d_{(G,3)}(\mu_1, \mu_3; \mu_2)+ K^d_{(G,3)}(\mu_3, \mu_2; \mu_1),
\label{Hd3_G2}
\ee
where
\bea
K^d_{(G,3)}(\mu_1, \mu_2; \mu_3) := &\&\small{{ (1 - (-1)^{d+|\mu|)} \over \mu_1 \mu_2 \mu_3 |\aut(\mu_1,\mu_2, \mu_3)|}} 
{\hskip -20pt} \sum_{a=0}^ {\min(\mu_1-1, {1 \over 2} (\mu_1 + \mu_2) -1)}    \sum_{c=0}^{\mu_3 -1}   \sum_{j=0}^d 
 \rho^j_{ c, \mu_1 + \mu_2 -a-1} \rho^{d-j}_{a,\mu_3 -c -1},\cr
  &\&
\eea
with $\aut(\mu_1,\mu_2, \mu_3)$ for a {\em composition}  $(\mu_1,\mu_2, \mu_3)$ (i.e., an unordered set of 
of positive integers)  defined in the same way (\ref{def_aut_lambda}) as for a partition,
and the cubic part by
\bea
\tilde{H}^{(d,3)}_{(G,3)}((\mu_1, \mu_2, \mu_3)) &\&= \small{{(1 - (-1)^{d+|\mu|})\over \mu_1 \mu_2 \mu_3 |\aut(\mu_1,\mu_2, \mu_3)|}}
\sum_{a=0}^{\mu_1 -1}\sum_{b=0}^{\mu_2 -1}\sum_{c=0}^{\mu_3 -1} \sum_{k=0}^d\sum_{j=0}^{d-k}\cr
&\&\times\rho^j_{a, \mu_2-b-1} \rho^k_{b, \mu_3 -c -1}\rho^{d-j-k}_{c, \mu_1-a-1}. \cr
&\&
\label{Hd3_G3}
\eea
\end{theorem}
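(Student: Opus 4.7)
The plan is to combine Proposition \ref{F_n_gen_fn} with the pair-correlator formulas of Proposition \ref{prop:detConnected}, reducing everything to a coefficient extraction in power series built from the $\rho_{ab}$'s. Specifically, since $\tilde{W}^G_n=\partial_{x_1}\cdots\partial_{x_n}\tilde{F}^G_n$ and $\tilde{F}^G_n$ is the generating series of $\tilde{H}^d_G(\mu)\,|\aut(\mu)|\,m_\mu$ over partitions $\mu$ of length $n$, the number $\tilde{H}^d_G(\mu)$ equals $(\mu_1\cdots\mu_n|\aut(\mu)|)^{-1}$ times the coefficient of $\beta^d x_1^{\mu_1-1}\cdots x_n^{\mu_n-1}$ in $\tilde{W}^G_n$ for any chosen ordering of the parts.

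Splitting $K^G(x,y)=(x-y)^{-1}+K_0^G(x,y)$ with $K_0^G(x,y)=\sum_{a,b\geq 0}\rho_{ab}x^ay^b$ as in (\ref{K0_exp}) and substituting into (\ref{eq:detConnected1})--(\ref{eq:detConnectedn}) reduces every case to a polynomial identity after cancellation of the singular pieces. For $n=1$, one reads off $\tilde{W}^G_1(x)=K_0^G(x,x)=\sum_{a,b}\rho_{ab}x^{a+b}$, and (\ref{Hd1_G}) is immediate. For $n=2$, the subtracted $(x_1-x_2)^{-2}$ annihilates the leading singularity of $K^G(x_1,x_2)K^G(x_2,x_1)$, leaving a linear piece $\tfrac{K_0^G(x_1,x_2)-K_0^G(x_2,x_1)}{x_1-x_2}$ and a quadratic piece $-K_0^G(x_1,x_2)K_0^G(x_2,x_1)$; the latter yields (\ref{Hd2_G2}) directly, and the former, via the geometric expansion of $\tfrac{x_1^a x_2^b-x_1^b x_2^a}{x_1-x_2}$ combined with the symmetry $\rho_{ab}^d=(-1)^{a+b+d}\rho_{ba}^d$ from (\ref{rho_abd_sym}), collapses to (\ref{Hd2_G1}) with its parity prefactor $1+(-1)^{d+|\mu|}$.

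For $n=3$, (\ref{eq:detConnectedn}) gives $\tilde{W}^G_3(x_1,x_2,x_3)=K^G(x_1,x_2)K^G(x_2,x_3)K^G(x_3,x_1)+K^G(x_1,x_3)K^G(x_3,x_2)K^G(x_2,x_1)$, since both $3$-cycles are even. Expanding each $K^G$ as $(x_i-x_j)^{-1}+K_0^G$ produces $2^3=8$ terms per cyclic product, which I would sort by their degree in $K_0^G$. The degree-$0$ pieces $\prod(x_i-x_{\sigma(i)})^{-1}$ cancel between the two cycles by a sign check. The degree-$3$ pieces, being pure products of $K_0^G$'s, assemble into (\ref{Hd3_G3}) after extracting the monomial coefficient. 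The degree-$1$ and degree-$2$ pieces involve mixed rational/polynomial products, which I would simplify via partial-fraction identities and the $\rho$-symmetry to obtain (\ref{Hd3_G1}) and (\ref{Hd3_G2}), respectively; the latter naturally breaks into the three summands $K^d_{(G,3)}(\cdot,\cdot;\cdot)$ of (\ref{Hd3_G2}) according to which index is isolated by the remaining rational factor.

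The main technical obstacle is the $n=3$ quadratic part: sorting the six mixed two-$K_0^G$ terms (three per cyclic product) into the three symmetrized contributions $K^d_{(G,3)}(\mu_1,\mu_2;\mu_3)+K^d_{(G,3)}(\mu_1,\mu_3;\mu_2)+K^d_{(G,3)}(\mu_3,\mu_2;\mu_1)$, and tracing how the partial-fraction expansion of $(x_i-x_j)^{-1}$ produces the asymmetric summation cutoff $\sum_{a=0}^{\min(\mu_1-1,\,(\mu_1+\mu_2)/2-1)}$ appearing in the definition of $K^d_{(G,3)}$. Verifying the cancellation of all purely rational singularities and correctly deriving the parity prefactor $1-(-1)^{d+|\mu|}$ in the linear part are the remaining delicate bookkeeping steps, each of which follows from (\ref{rho_abd_sym}) once the terms are grouped according to the cyclic orbit they come from.
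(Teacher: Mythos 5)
Your proposal is correct and follows essentially the same route as the paper's own (very terse) proof: extract the coefficient of $\beta^d x_1^{\mu_1-1}\cdots x_n^{\mu_n-1}$ from $\tilde W^G_n$ using Proposition \ref{F_n_gen_fn} together with relation (\ref{tilde_WGn_dFGn}), substitute $K^G=(x-y)^{-1}+K_0^G$ with the expansion (\ref{K0_exp}) into Proposition \ref{prop:detConnected}, and sort terms by degree in $K_0^G$, using the symmetry (\ref{rho_abd_sym}) to produce the parity prefactors. Your write-up is in fact more explicit than the paper's about the normalization $(\mu_1\cdots\mu_n|\aut(\mu)|)^{-1}$, the cancellation of the purely singular pieces, and where the bookkeeping is delicate (the mixed degree-$1$ and degree-$2$ terms at $n=3$).
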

\begin{proof}
Compare the expansion (\ref{tilde_Fn_expansion}) with  the expressions in
Proposition  \ref{prop:detConnected} for $\tilde{W}_1$, $\tilde{W}_2$, $\tilde{W}_3$ as polynomials in the pair correlators.
Substitute the series expansion (\ref{K0_exp}) for the $K(x_i,x_j)$'s  
and  the series expansion for  $\rho_{ab}^d$ in $\beta$ following from eq.~(\ref{rho_ab}), 
and use relation (\ref{tilde_WGn_dFGn}).
Equating like powers of $\beta$ gives formulae (\ref{Hd1_G}) - (\ref{Hd3_G3}) for the weighted connected Hurwitz numbers 
$\tilde{H}^d_G(\mu)$  as polynomials in the Taylor coefficients $\{g_i\}$  of $G(z)$.
\end{proof}

\begin{remark}
The nonconnected versions of the weighted Hurwitz  numbers are obtained by applying
the cumulant relations (\ref{tilde_W123_cumulant}), which give
\bea
H^d_G((\mu_1))&\& = \tilde{H}^d_G((\mu_1)), \\
H^d_G((\mu_1,\mu_2))&\& = \tilde{H}^d_G((\mu_1,\mu_2))+ {1\over |\aut(\mu_1, \mu_2))|}
\sum_{k=0}^d H^k_G((\mu_1))H^{d-k}_G((\mu_2)), \\
H^d_G((\mu_1, \mu_2, \mu_3))
&\&=\tilde{H}^d_G((\mu_1, \mu_2, \mu_3))
+{1\over |\aut(\mu_1,\mu_2, \mu_3)|}
\sum_{k=0}^d  \Big{(} |\aut(\mu_2, \mu_3)| H^k_G((\mu_1)) H^{d-k}_G((\mu_2, \mu_3))\cr
&\&
+ 
  |\aut(\mu_1, \mu_3)| H^k_G((\mu_2) )H^{d-k}_G((\mu_1, \mu_3))+ 
   |\aut(\mu_1, \mu_2)| H^k_G((\mu_3)) H^{d-k}_G((\mu_1, \mu_2)) \cr
  &\&  -2  \sum_{j=0}^{d-k}H^j((\mu_1))  H^k((\mu_2))  H^{d-j-k}((\mu_3)) \Big{)}.
\eea
\end{remark}
\begin{remark}
Similar formulae, consisting of degree $n$ polynomials in the coefficients $\rho^d_{ab}({\bf g})$,
may be obtained for all values of $n=\ell(\mu)$, determining thereby  the corresponding weighted Hurwitz
numbers $\tilde{H}^d_G(\mu)$, $H^d_G(\mu)$ from the series expansions of $\tilde{W}_n(x_1, \dots , x_n)$
for all $n$.
\end{remark}

\begin{remark}
Inserting the expressions for the coefficients $\rho_{ab}^d({\bf g})$, following from (\ref{rho_ab}), (\ref{rho_ab_series}), as graded homogeneous 
polynomials in the Taylor coefficients $\{g_i\}_{i\in \Nb}$ of the generating function $G(z)$, gives $\tilde{H}^d_G(\mu)$
as a weight $d$ graded homogeneous polynomial in the $g_i$'s.
\end{remark}

Tables \ref{Hurwitz_table4} - \ref{Hurwitz_table7}  of Appendix \ref{eval_weighted_hurwitz} display
$H^d_G(\mu)$ and $\tilde{H}^d_G(\mu)$ for generic values of the Taylor coefficients  $\{g_i\}_{i\in \Nb}$,
 $n  = 1,2,3$, and small values of $N$ and $d$. Their evaluation for the special cases of: exponential, 
 rational and quantum exponential weight generating functions  $e^z$,  $G_{({\bf c,} {\bf d})}(z)$ and $H_q(z)$, 
 respectively, are displayed in Tables \ref{Hurwitz_table8} - \ref{Hurwitz_table13}.

\renewcommand{\labelenumi}{\Alph{enumi})}
  
   \appendix
\section*{Appendices}

%%%%%%%   A  Appendix The coefficents $\rho_{ab}^d$ %%%%%%%
\section{Table of coefficents $\rho_{ab}^d({\bf g})$}
\label{rho_abc_coeffs}

\begin{table}[H]
\label{rho_table1}
\small{
\caption{$\rho_{ab}^1({\bf g})$ for $0 \le a, b, \le 4$}}
\center{
\begin{tabular}{c c c c c c c}\\
{\hskip 12 pt}$a\big{\backslash} b$  & $\big{|}$ & $0$  & $1$ & $2$ & $3$ &$4$ 
\\
\hline
\hline

$0 $ & $\bigg{|}$ & $0$ & ${1\over 2}g_1$ & $-{1\over 2}g_1$ &${1\over 4} g_1$ &  $-{1\over 12} g_1$ 
 \\
\hline
1 &\big{|}  & ${1\over 2}g_1$ & $0$& $-{1\over 4} g_1$ & ${1\over 6}g_1$ & $-{1\over 16}g_1$
\\
\hline
2 &\big{|}   &${1\over 2}g_1$ & $-{1\over 4}g_1$& $0$& ${1\over 24}g_1$ & $-{1\over 48}g_1$
\\
\hline
3 &\big{|}   &${1\over 4}g_1$ & $-{1\over 6}g_1$& ${1\over 24}g_1$& $0$ & $-{1\over 288}g_1$
\\
\hline 
4  &\big{|}  & ${1\over 12}g_1$ & $-{1\over 16}g_1$ & ${1\over 48}g_1$ & $-{1\over 288}g_1$ & $0$
\\
\hline
\hline
\end{tabular}}
\end{table}

\begin{table}[H]
\label{rho_table2}
\small{
\caption{$\rho_{ab}^2({\bf g})$ for $0 \le a,b \le 4$. The last column $\rho^2_{a4}({\bf g})$ is determined by
applying the symmetry property (\ref{rho_abd_sym}) and setting $\rho^2_{4,4} = -{5\over 864} (g_1^2 - 2 g_2)$.}}
\center{
\begin{tabular}{c c c c c c }\\
{\hskip 12 pt} $a\big{\backslash} b$  & $\big{|}$ & $0$  & $1$ &  $2$ & $3$ 
\\
\hline
\hline

$0 $ & $\big{|}$ & $0$  & $-{1\over 2}g_2$ & ${1\over 6} (2 g_1^2 + 5 g_2)$ & $-{1\over 24} (11 g_1^2 + 14 g_2)$ 
\\
\hline
$1$ &\big{|} & ${1\over 2}g_2$ & ${1\over 3} (g_1^2 - 2 g_2)$ & ${1\over 8} (-g_1^2 + 6 g_2)$ & $-{1\over 6 }(g_1^2 + 3 g_2)$ 
\\
\hline
$2$ &\big{|}  &${1\over 6} (2 g_1^2 + 5 g_2)$ & ${1\over 8} (g_1^2 - 6 g_2)$ & ${1\over 4} (-g_1^2 + 2g_2)$ & ${1\over 72}(5 g_1^2 - 19 g_2)$ 
\\
\hline
$3$ &\big{|}   & ${1 \over 24} (11 g_1^2 + 14 g_2)$ & $-{1\over 6} (g_1^2 + 3 g_2)$ & ${1\over 72} (-5 g_1^2 + 19 g_2)$ & ${1\over18}( g_1^2 - 2g_2)$ \\ 
\hline
$4$  &\big{|}  & ${1\over 24} (7g_1^2 + 6 g_2)$ & $-{1\over 144} (25 g_1^2 + 31 g_2)$ & ${1\over 48} (g_1^2 + 5 g_2)$
& ${1\over 576} (7 g_1^2 - 22 g_2)$ 
\\
\hline
\hline
\end{tabular}}
\end{table}

\begin{table}[H]
\label{rho_table3}
\small{
\caption{$\rho_{ab}^3({\bf g})$ for $0 \le a, b\le 4$. The last column $\rho^2_{a4}({\bf g})$ is determined
by applying the symmetry property (\ref{rho_abd_sym}) and setting  $\rho^3_{4,4} = 0 $.}}
\center{
\begin{tabular}{c c c c c c }\\
{\hskip 12 pt} $a\big{\backslash} b$ & $\big{|}$ & $0$  & $1$ & $2$ & $3$  
\\
\hline
\hline
$0 $ & $\big{|}$ &$0$ & ${1\over 2}g_3$ & $-{1\over 6} (6 g_1 g_2 - 9 g_3)$ & ${1\over 4} ({ g_1^3 + 8 g_1 g_2 \atop + 6 g_3})$ 
\\
\hline
$1$ &\big{|} &${1\over 2}g_3$ & $0$ & ${1\over 4} ({g_1^3 - 2 g_2g_2 \atop - 4 g_3})$ & ${1\over 6} ({-g_1^3 + 8 g_1g_2 \atop + 7 g_3})$ 
 \\
\hline
$2$ &\big{|} &${1\over 6} (6 g_1 g_2+ 9 g_3)$ & ${1\over 4} ({g_1^3 - 2 g_1 g_2 \atop - 4 g_3})$ & $0$ & ${1\over 24} ({-5 g_1^3 + 10 g_1 g_2 \atop + 9 g_3})$ 
\\
\hline
$3$ &\big{|} &${1\over 4} ({ g_1^3 + 8 g_1 g_2\atop+ 6 g_3})$ & ${1\over 6} ({g_1^3 - 8 g_1 g_2\atop  - 7 g_3})$ 
& ${1\over 24} ({-5 g_1^3 + 10 g_1g_2\atop + 9 g_3})$ & $0$ 
\\
\hline 
$4$ &\big{|} &${5\over 12} ({g_1^3 + 4 g_1 g_2 \atop + 2 g_3})$ & $-{1\over 48 }({5 g_1^3 + 60 g_1 g_2 \atop + 33 g_3})$ 
& ${1\over 48} ({-5 g_1^3 + 22 g_1 g_2 \atop + 13 g_3})$ & ${1\over 144} ({7 g_1^3 - 14 g_1 g_2\atop  - 8 g_3})$ 
\\
\hline
\hline
\end{tabular}}
\end{table}

%%%%%%%   Appendix B Generic case %%%%%%%

\section{Evaluation of  weighted Hurwitz numbers}
\label{eval_weighted_hurwitz}

\subsection{Generic case}
\label{generic_case}

\begin{table}[H]
\label{Hurwitz_table4}
\small{
\caption{\small{Nonconnected weighted Hurwitz numbers $H^d_G(\mu)$ with $\ell(\mu)=1$ or $2$, \ $N=2,3$ or $4$}}
\begin{tabular}{c c c c }\\
$N$ & $\mu$ & $H_G^{N+\ell(\mu)-2}$ & $H_G^{N+\ell(\mu)}$  \\
\hline
\hline
2&(2) & ${1\over 2} g_1$ & ${1\over 2} g_3 $\\
\hline
3&(3)& ${1\over 3}(g_1^2 + g_2) $
& ${1\over 3} (g_2^2 + 4 g_1 g_3 + 5 g_4)$ \\
\hline
3&(21)& $g_1 g_2 + {3\over 2}g_3 $
&$ 3 g_1 g_4 + 2g_2 g_3 + {11\over 2} g_5 $\\
\hline
4 & (4) & ${1\over 4} (g_1^3 + 3 g_1 g_2 + g_3)$ 
&${5\over 2}g_1^2 g_3 + {5\over 4}g_1g_2^2 + {25\over 4} g_4 g_1 + {15\over 4} g_2 g_3 + {15\over 4}g_5$ \\
\hline
4 & (31) & $g_1^2 g_2  +  {4\over 3}g_2^2 + {10\over 3} g_1 g_3 +{8\over 3}g_4 $
& $ g_1(6 g_1 g_4 +8 g_2 g_3 +24 g_5 ) + g_2(g_2^2 +16 g_4)+7 g_3^2 +22 g_6$\\ 
\hline
4 & (22) &${1\over 2}g_1^2 g_2 +{1\over 4}g_2^2 +  g_1 g_3   +  {1\over 2}g_4$
& $ g_1( {11\over 4} g_1 g_4 +{13\over 4} g_2  g_3 + 9 g_5) 
+g_2( {1\over 4}g_2^2 +{19\over 4} g_4 )  +{11\over 4}g_3^2 + {25\over 4}g_6$\\ 
\hline
\end{tabular}}
\end{table}

\begin{table}[H]
\label{Hurwitz_table5}
\small{
\caption{\small{Connected weighted Hurwitz numbers $\tilde{H}^d_G(\mu)$ with $\ell(\mu)=1$ or $2$, \ $N=2,3$ or $4$}}
\begin{tabular}{c c c c }\\
$N$ & $\mu$ & $\tilde{H}_G^{N+\ell(\mu)-2}$ & $\tilde{H}_G^{N+\ell(\mu)}$  \\
\hline
\hline
2&(2) & ${1\over 2} g_1$ & ${1\over 2} g_3 $\\
\hline
3&(3)& ${1\over 3}(g_1^2 + g_2) $
& ${1\over 3} (g_2^2 + 4 g_1 g_3 + 5 g_4)$ \\
\hline
3&(21)& $g_1 g_2 + g_3 $
&$ 3 g_1 g_4 + 2g_2 g_3 + 5g_5 $\\
\hline
4 & (4) & ${1\over 4} (g_1^3 + 3 g_1 g_2 + g_3)$ 
&${5\over 2}g_1^2 g_3 + {5\over 4}g_1g_2^2 + {25\over 4} g_4 + {15\over 4} g_2 g_3 + {15\over 4}g_5$ \\
\hline
4 & (31) & $g_1^2 g_2  +  g_2^2 + 2 g_1 g_3 +g_4 $
& $ g_1(6 g_1 g_4 +8 g_2 g_3 +20 g_5 ) + g_2(g_2^2 +14 g_4)+6 g_3^2 +15 g_6$\\ 
\hline
4 & (22) &${1\over 2}g_1^2 g_2 +{1\over 4}g_2^2 + g_1 g_3   +  {1\over 2}g_4$
& $ g_1( {11\over 4} g_1 g_4 +{13\over 4} g_2  g_3 + {35\over 4} g_5) 
+g_2( {1\over 4}g_2^2 +{19\over 4} g_4 )  +{5\over 2}g_3^2 + {25\over 4}g_6$\\ 
\hline
\end{tabular}}
\end{table}

Further examples, with $\ell(\mu)=3$ and $N=3, 4, 5$ or $6$, are given in Tables \ref{Hurwitz_table6}
and \ref{Hurwitz_table7}.

\begin{table}[H]
\label{Hurwitz_table6}
\small{
\caption{\small{ Nonconnected weighted Hurwitz numbers $H^d_G(\mu)$ with $\ell(\mu)=3$, $ N= 3, 4, 5$ or $6$}}}
\center{
\begin{tabular}{c c c c }\\
$N$ & $\mu$ & $d$ & $H^d_G(\mu)$  \\
\hline
\hline
3&(1,1,1) & $2$ 
& ${1\over 2} g_2 $
\\
\hline
4&(2,1,1)& $3 $
& ${5\over 4} (g_1g_2 +  g_3) $ 
\\
\hline
5&(2,2,1)& $2 $
&${1\over 8} g_1^2$
\\
\hline
5 & (2,2,1) & $4$ 
&$ g_1^2 g_2 +{1\over 4} g_2^2+{7\over 4} g_4$ 
\\
\hline
6 & (3,2,1) & $3 $
& $  {1/6} ( g_1^3 + g_1g_2 )  $
\\ 
\hline
6 & (3,2,1) &$5$ 
&  $ {1/6} (11 g_1^3 g_2+ 31 g_1^2 g_3 +15 g_2g_3 + 
   2g_1(9 g_2^2 + 13 g_4)) +  g_5  $
\\
\hline
\end{tabular}}
\end{table}

\begin{table}[H]
\label{Hurwitz_table7}
\small{
\caption{\small{ Connected weighted Hurwitz numbers $\tilde{H}^d_G(\mu)$ with $\ell(\mu)=3$, $ N=3, 4, 5$ or $6$.
By the Riemann-Hurwitz formula, $\tilde{H}^d_G(1,1,1) = 0$ for $d<4$, $\tilde{H}^d_G(2,1,1) = 0$ for $d<5$, $\tilde{H}^d_G(2,2,1) = 0$ for $d<6$
and $\tilde{H}^d_G(2,2,2) = \tilde{H}^d_G(3,2,1) =0$ for $d<7$. }}
\begin{tabular}{c c c c }\\
$N$ & $\mu$ & $d$ & $\tilde{H}^d_G(\mu)$  \\
\hline
\hline
3&(1,1,1) & $4$ 
& ${1\over 3}(g_2^2 + g_1 g_3 + 2 g_4) $
\\
\hline
4&(2,1,1)& $5 $
& ${1\over 2}(2g_1^2g_3 + 7 g_2 g_3 +g_1(3 g_2^2 + 7 g_4) + 5 g_5) $ 
\\
\hline
4&(2,1,1)& $7 $
& $5 g_2^2 g_3 + 22 g_3 g_4 + 10 g_1^2 g_5 + 
 15 g_2 (g_1 g_4 + 2 g_5) + 5 g_1 (g_3^2 + 8 g_6) + 35 g_7 $ 
\\
\hline
5&(2,2,1)& $6 $
&$g_2^3 + g_1^3 g_3 + 5 g_2 g_4 + g_1^2(2g_2^2 + 5 g_4) + g_1( 9 g_2 g_3 + 7 g_5) +3(g_3^2 +g_6)$
\\
\hline
6 & (2,2,2) & $7$ 
&$ 1/6 (2 g_1^4 g_3 + 11 g_2^2 g_3 + 17 g_3 g_4 + 
   g_1^3 (5 g_2^2 + 13 g_4) + 13 g_2 g_5 $ 
   \\
 &  & $ $ 
&$  +  3 g_1^2 (11 g_2 g_3 + 9 g_5) 
  + g_1 (7 g_2^3 + 22 g_3^2 + 36 g_2 g_4 + 23 g_6) + 7 g_7)$ 
\\ 
\hline
6 & (3,2,1) &$7$ 
&$2 g_1^4 g_3 + 18 g_2^2 g_3+ 17 g_3 g_4 + 
 g_1^3 (5 g_2^2 + 13 g_4) + 18 g_2 g_5  $
\\ 
 &  &$$ 
&$
+ g_1^2 (35 g_2g_3+ 27 g_5) 
+ g_1 (10 g_2^3 + 22 g_3^2 + 43 g_2 g_4 + 23 g_6) + 7 g_7$
\\
\hline
\end{tabular}}
\end{table}

%%%%%%%   Appendix B.3 Exponential case %%%%%%%
\subsection{Exponential case (simple Hurwitz numbers)}
\label{exponential_case}
Choosing the exponential function $G(z) = e^z$, with Taylor
coefficients
\be
g_i = {1\over i!}
\label{exp_gi}
\ee
as weight generating function corresponds \cite{GH2, H1} to a Dirac measure for the 
weighted Hurwitz numbers supported uniformly on simple branch points,  with ramification profiles 
that are all $2$-cycles:
\be
(\mu^{(1)}, \dots , \mu^{(k)}) = (\underbrace{(2,(1)^{N-2}), \dots , (2,(1)^{N-2})}_{k \text{\ times}}),
\ee
the case considered in \cite{Ok, Pa}. Note that for this case $d=k$ and
\be
H^d_{\exp}(\mu) = {1\over d!} H(\underbrace{(2,(1)^{N-2}), \dots , (2,(1)^{N-2})}_{k \text{\ times}}, \mu).
\ee
Tables \ref{Hurwitz_table8}-\ref{Hurwitz_table9}
give the evaluations of $\tilde{H}^d_{\exp}(\mu)$ and $H^d_{\exp}(\mu)$ for this case.

\begin{table}[H]
\label{Hurwitz_table8}
\small{
\caption{\small{Connected and nonconnected simple Hurwitz numbers $H^d_{\exp}(\mu)$ with $\ell(\mu)=1$ or $2$, \ $N=2,3$ or $4$}}}
\center{
\begin{tabular}{c c c c c c }\\
$N$ & $\mu$ & $\tilde{H}_{\exp}^{N+\ell(\mu)-2}$  & $\tilde{H}_{\exp}^{N+\ell(\mu)}$ & $H_{\exp}^{N+\ell(\mu)-2}$ & $H_{\exp}^{N+\ell(\mu)}$  \\
\hline
\hline
2&(2) & ${1\over 2}$ & ${1\over 12}$ & ${1\over 2} $ & ${1\over 12} $\\
\hline
3&(3)& ${1\over 2}$ & ${3\over 8}$ & ${1\over 2}$ & ${3\over 8} $\\
\hline
3&(21)& ${3\over 4}$ & ${1\over 3}$ & ${3\over 4} $
&$ {27\over 80} $
\\
\hline
4 & (4) & ${2\over 3}$ & ${4\over 3}$ & ${2\over 3} $ &${4\over 3}$  \\
\hline
4 & (31) & ${9\over 8}$ & ${27\over 16}$ & ${3\over 4}$
& $ {9\over 5}$\\ 
\hline
4 & (22) & ${1\over 2}$ & ${2\over 3}$ &${13\over 24}$
& ${121\over 180}$\\ 
\hline
\end{tabular}}
\end{table}

\begin{table}[H]
\label{Hurwitz_table9}
\small{
\caption{\small{ Connected and nonconnected simple Hurwitz numbers $H^d_{\exp}(\mu)$ with $\ell(\mu)=3$, $ N= 3, 4, 5$ or $6$}}}
\center{
\begin{tabular}{c c c c c}\\
$N$ & $\mu$ & $d$ & $\tilde{H}^d_{\exp}(\mu)$ & $H^d_{\exp}(\mu)$  \\
\hline
\hline
$3$ & $(1,1,1)$ & $4$ &  ${1\over 6}$  & $ {3\over 16} $
\\
\hline
$4$&$(2,1,1)$ & $5 $  & $1$ & $ {41\over 30} $ 
\\
\hline
$4$&$(2,1,1)$ & $7 $   & ${13\over 12} $ & ${73\over 63}$
\\
\hline
$5$ & $(2,2,1)$ & $6$  &$ 2$ &${521\over 180}$
\\
\hline
$6$ & $(2,2,2)$ & $7$ &${4\over 3}$  & $ {9853\over 5760} $ 
\\ 
\hline
$6$ & $(3,2,1)$ &$7$  & $ 9$ & ${2511\over 160}$
\\
\hline
\end{tabular}}
\end{table}

%%%%%%%   Appendix B.3 Rational case %%%%%%%
\subsection{Rational case}
\label{rational_case}

The Taylor coefficients $\{g_i({\bf c}, {\bf d})\}_{i\in \Nb^+}$  for rational weight generating functions 
$G=G_{{\bf c}, {\bf d}}$, are given  in eq.~(\ref{rational_g_i}) in terms of  the elementary and complete symmetric
polynomials $\{e_i({\bf c})\}$ and  $\{h_i({\bf d})\}$ in the parameters ${\bf c} = (c_1, \dots , c_L)$ 
and ${\bf d} = (d_1, \dots , d_M)$. 
Substituting these in Tables \ref{Hurwitz_table4} - \ref{Hurwitz_table7}, 
we obtain the specialization to rationally weighted Hurwitz numbers.

In particular, the case of weakly monotone (or signed) single Hurwitz numbers ($L=0$, $M=1$, $d_1 =1$)
 is obtained by substituting the values
\be
g_i =1,\quad\forall i \in \Nb^+
\ee
in Tables \ref{Hurwitz_table4} - \ref{Hurwitz_table7}. The case $L=2$, $M=0$, gives the
weighted enumeration of Belyi curves, with three branch points, two weighted ones,
with ramification profiles $(\mu^{(1)}, \mu^{(2)})$ and one unweighted one, with profile $\mu$.
It is obtained by substituting
\be
g_1 = c_1 + c_2, \quad g_2 = c_1 c_2, \quad g_i = 0, \quad \forall i\ge 3
\ee
in Tables \ref{Hurwitz_table4} - \ref{Hurwitz_table7}.

%%%%%%%  Appendix B.4 Quantum case %%%%%%%

\subsection{Quantum case}
\label{quantum_case}

For the case of the quantum weight generating function $G=H_q$, the Taylor coefficients
are given by eq.~(\ref{g_qqi}).
Substituting these in Tables \ref{Hurwitz_table4} - \ref{Hurwitz_table7}, 
we obtain the specialization to quantum weighted Hurwitz numbers 
given in Tables \ref{Hurwitz_table10}-\ref{Hurwitz_table13}.

\begin{table}[H]
\label{Hurwitz_table10}
\small{
\caption{\small{Nonconnected quantum weighted Hurwitz numbers
 $H^d_{G_{H_q}}(\mu)$ with $\ell(\mu)=1$ or $2$, \ $N=2,3$ or $4$}}
\begin{tabular}{c c c c }\\
$N$ & $\mu$ & $H_{G_{H_q}}^{N+\ell(\mu)-2}$ & $H_{G_{H_q}}^{N+\ell(\mu)}$  \\
\hline
\hline
2&(2) & ${1\over 2(q;q)_1} $ 
& ${1\over 2 (q;q)_3} $\\
\hline
3&(3)& ${2+q\over 3(q;q)_3} $
& ${10 + 5 q + 6 q^2 + 5 q^3 + q^4\over 3(q;q)_4}$ \\
\hline
3&(21)& ${ 5 + 2 q + 2 q^2 \over  2(q;q)_3}$
&$  {21 + 10 q + 14 q^2 + 14 q^3 + 14 q^4 + 4 q^5 + 
 4 q^6 \over 2(q;q)_5} $\\
\hline
4 & (4) & ${5 + 5 q + 5 q^2 + q^3\over 4 (q;q)_3}$ 
&${5 (14 + 14 q + 21 q^2 + 24 q^3 + 25 q^4 + 14 q^5 + 11 q^6 + 4 q^7 + 
   q^8)\over 4 (q;q)_5}$ \\
\hline
4 & (31) & $25 + 20 q + 27 q^2 + 23 q^3 + 10 q^4 + 3 q^5\over 3(q;q)_4 $
& ${{84 + 77 q + 125 q^2 + 156 q^3 + 198 q^4 + 191 q^5 + 163 q^6 \atop+ 
 124 q^7 + 94 q^8 + 52 q^9 + 21 q^{10} + 10 q^{11} + q^{12}} \over (q;q)_6}$\\ 
\hline
4 & (22) &${10 + 10 q + 13 q^2 + 12 q^3 + 5 q^4 + 2 q^5\over4 (q;q)_4}$
& $ {{231 + 231 q + 370 q^2 + 469 q^3 + 589 q^4 + 579 q^5 + 489 q^6 \atop + 
 378 q^7 + 281 q^8 + 161 q^9 + 62 q^{10} + 30 q^{11} + 2 q^{12}}\over 8(q;q)_6}$\\ 
\hline
\end{tabular}}
\end{table}

\begin{table}[H]
\label{Hurwitz_table11}
\small{
\caption{\small{Connected quantum weighted Hurwitz numbers, $\tilde{H}^d_{G_{H_q}}(\mu)$
 with $\ell(\mu)=1$ or $2$, \ $N=2,3$ or $4$}}
\begin{tabular}{c c c c }\\
$N$ & $\mu$ & $\tilde{H}_{G_{H_q}}^{N+\ell(\mu)-2}$ & $\tilde{H}_{G_{H_q}}^{N+\ell(\mu)}$  \\
\hline
\hline
2&(2) & ${1\over 2(q;q)_1}$
 & ${1\over 2 (q;q)_2} $\\
\hline
3&(3)& ${2+q\over 3(q;q)_3}$
& ${10 + 5 q + 6 q^2 + 5 q^3 + q^4\over 3(q;q)_4}$ \\
\hline
3&(21)& ${2 + q + q^2\over (q;q)_3} $
&$ {10 + 5 q + 7 q^2 + 7 q^3 + 7 q^4 + 2 q^5 + 2 q^6\over (q;q)_5}$\\
\hline
4 & (4) & ${5 + 5 q + 5 q^2 + q^3\over 4 (q;q)_3}$ 
&${5 (14 + 14 q + 21 q^2 + 24 q^3 + 25 q^4 + 14 q^5 + 11 q^6 + 4 q^7 + 
   q^8)\over 4 (q;q)_5}$ \\
\hline
4 & (31) & ${5 + 5 q + 7 q^2 + 6 q^3 + 3 q^4 + q^5\over (q;q)_4} $
& $ {{70 + 70 q + 115 q^2 + 145 q^3 + 185 q^4 + 180 q^5 + 156 q^6 \atop+ 
 120 q^7 + 91 q^8 + 51 q^9 + 21 q^{10} + 10 q^{11} + q^{12}} \over (q;q)_6}$\\ 
\hline
4 & (22) &${9 + 9 q + 12 q^2 + 11 q^3 + 5 q^4 + 2 q^5\over 4(q;q)_4}$
& ${{114 + 114 q + 183 q^2 + 232 q^3 + 292 q^4 + 287 q^5 + 243 q^6 \atop
+  188 q^7 + 140 q^8 + 80 q^9 + 31 q^{10} + 15 q^{11} + q^{12}}\over 4(q;q)_6}$\\ 
\hline
\end{tabular}}
\end{table}

\begin{table}[H]
\label{Hurwitz_table12}
\small{
\caption{\small{ Nonconnected quantum weighted Hurwitz numbers $H^d_{G_{H_q}}(\mu)$ 
with $\ell(\mu)=3$, $ N= 4, 5$ or $6$}}}
\center{
\begin{tabular}{c c c c }\\
$N$ & $\mu$ & $d$ & $H^d_{G_{H_q}}$  \\
\hline
\hline
3&(1,1,1) & $2$ 
& ${1\over 2(q;q)_2}$
\\
\hline
4&(2,1,1)& $3 $
& ${5\over 4} {2+q+q^2\over (q;q)_3} $ 
\\
\hline
5&(2,2,1)& $2 $
&${1+q\over 8 (q;q)_2}$
\\
\hline
5 & (2,2,1) & $4$ 
&$ {14 + 16 q + 21 q^2 + 20 q^3 + 9 q^4 + 4 q^5\over 4(q;q)_4}$ 
\\
\hline
6 & (3,2,1) & $3 $
& $  {(2 + q)(1+q+q^2)\over 6 (q;q)_3} $
\\ 
\hline
6 & (3,2,1) &$5$ 
&  $ {107 + 172 q + 287 q^2 + 369 q^3 + 409 q^4 + 319 q^5 + 248 q^6 + 
 133 q^7 + 51 q^8 + 11 q^9 \over 6 (q;q)_5}$
\\
\hline
\end{tabular}}
\end{table}

\begin{table}[H]
\label{Hurwitz_table13}
\small{
\caption{\small{ Connected quantum weighted Hurwitz numbers $\tilde{H}^d_{G_{H_q}}(\mu)$ 
with $\ell(\mu)=3$, $ N= 4, 5$ or $6$.}}}
\center{
\begin{tabular}{c c c c }\\
$N$ & $\mu$ & $d$ & $\tilde{H}^d_{G_{H_q}}$  \\
\hline
\hline
3&(1,1,1) & $4$ 
& ${4 + 2 q + 3 q^2 + 2 q^3 + q^4\over 3 (q;q)_4} $
\\
\hline
4&(2,1,1)& $5 $
& ${24 + 24 q + 39 q^2 + 44 q^3 + 47 q^4 + 28 q^5 + 23 q^6 + 8 q^7 + 3 q^8\over 2(q;q)_5} $ 
\\
\hline
4&(2,1,1)& $7 $
& ${{162 + 162 q + 279 q^2 + 371 q^3 + 518 q^4 + 593 q^5 + 685 q^6 + 
  598 q^7 + 598 q^8  \atop
  + 491 q^9+ 404 q^{10} + 257 q^{11} + 182 q^{12} + 
  90 q^{13} + 50 q^{14} + 15 q^{15} + 5 q^{16}}\over (q;q)_7} $ 
\\
\hline
5&(2,2,1)& $6 $
&${36 + 54 q + 99 q^2 + 141 q^3 + 189 q^4 + 207 q^5 + 204 q^6 + 
 179 q^7
 \atop  + 143 q^8 + 97 q^9 + 53 q^{10 }+ 28 q^{11} + 8 q^{12} + 2 q^{13}}\over (q;q)_6$
\\
\hline
6 & (2,2,2) & $7$ 
&$ {{216 + 432 q + 891 q^2 + 1485 q^3 + 2295 q^4 + 3099 q^5 + 3922 q^6 + 
 4377 q^7 + 4689 q^8 + 4567 q^9  + 4157 q^{10}  \atop
+ 3435 q^{11} + 2655 q^{12} + 
 1837 q^{13} + 1173 q^{14} + 638 q^{15} + 301 q^{16} + 117 q^{17} + 29 q^{18} +  5 q^{19}}
 \over 6 (q;q)_7}$ 
\\ 
\hline
6 & (3,2,1) &$7$ 
&${{240 + 480 q + 1002 q^2 + 1664 q^3 + 2584 q^4 + 3484 q^5 + 4416 q^6 + 
 4927 q^7 + 5288 q^8 + 5139 q^9 + 4687 q^{10} \atop
 + 3863 q^{11} + 2990 q^{12} +  2063 q^{13} + 1319 q^{14} + 711 q^{15} + 338 q^{16} + 128 q^{17} + 32 q^{18} +  5 q^{19}} \over (q;q)_7}$
\\
\hline
\end{tabular}}
\end{table}

%%%%%%%%%%%%%%%%%%%%% Acknowledgements %%%%%%%%%%%%%%%%%
 \bigskip
\noindent 
\small{ {\it Acknowledgements.} 
This work was partially supported by the Natural Sciences and Engineering Research Council of Canada (NSERC) 
and the Fonds de recherche du Qu\'ebec, Nature et technologies (FRQNT).  
\bigskip
\bigskip 
%%%%%%%%%%%%%%%%%%%%%%%%%%%%%%%%%%%

 %%%%%%%%%%%%%%%% Bibliography %%%%%%%%%%%%%%%%
 
 \newcommand{\arxiv}[1]{\href{http://arxiv.org/abs/#1}{arXiv:{#1}}}

\bigskip
\noindent


\begin{thebibliography}{99}

\bibitem{AC1} J. Ambj{\o}rn and L. Chekhov, ``The matrix model for dessins d'enfants'', 
{\em Ann. Inst. Henri Poincar\'e, Comb. Phys. Interact.} {\bf  1},  337-361 (2014).

\bibitem{AC2} J. Ambjorn and L. Chekhov, ``A matrix model for hypergeometric Hurwitz numbers'',
{\em Theor. Math. Phys.'} {\bf 181}, 1486-1498 (2014).

 \bibitem{ACEH1} A.~Alexandrov, G. Chapuy, B. Eynard and J. Harnad, 
``Weighted Hurwitz numbers and topological recursion: an overview'', 
{\em J. Math. Phys.} {\bf 59},  081102: 1-20 (2018).

\bibitem{ACEH2} A.~Alexandrov, G.~Chapuy, B.~Eynard and J.~Harnad,
  ``Fermionic approach to weighted Hurwitz numbers and topological recursion,''
  {\em Commun.\ Math.\ Phys.} {\bf 360} no.2,  777  (2018).
  
  \bibitem{ACEH3} A.~Alexandrov, G.~Chapuy, B.~Eynard and J.~Harnad,
  ``Weighted Hurwitz numbers and topological recursion''
    \arxiv 1806.09738 (to appear in   {\em Commun.\ Math.\ Phys.} (2020)).

  \bibitem{AMMN}  A.~Alexandrov, A.~Mironov, A. Morozov  and S. Natanzon, ``
On KP-integrable Hurwitz functions'',  {\em JHEP} {\bf 1411} 080  (2014).

   \bibitem{Be} Dan Bernstein, ``The computational complexity of rules for the character table of $S_n$'', 
   {\em J. of Symb. Comp.} {\bf 37}, 727-748 (2014). 
 
    \bibitem{BEMS} G. Borot, B. Eynard, M. Mulase and B. Safnuk, ``A matrix model for simple Hurwitz numbers, and topological recursion'', {\em J. Geom. Phys.} {\bf 61}, 522--540 (2011). 
    
 \bibitem{BH}  M. Bertola and  J. Harnad, ``Rationally weighted Hurwitz numbers, Meijer G-functions and matrix integrals'', arXiv:1904.03770,
 {\em J. Math. Phys.} (in press, 2019).
 
\bibitem{EO2} B. Eynard  and N. Orantin, ``Topological recursion in enumerative geometry and random matrices'', 
{\em J. Phys. A} {\bf 42 },  293001 (2009).

\bibitem{Frob1}  G. Frobenius, ``\"Uber die Charaktere der symmetrischen Gruppe'', {\em Sitzber. Pruess. Akad. Berlin}, 516-534  (1900).

\bibitem{Frob2}  G. Frobenius, ``\"Uber die Charakterische Einheiten der symmetrischen Gruppe'', {\em Sitzber.  Akad. Wiss., Berlin}, 328-358  (1903). Gesammelte Abhandlung III, 244-274.

\bibitem{FH} William Fulton and Joe Harris,  ``Representation Theory",  Graduate Texts in Mathematics, {\bf 129}, 
Springer-Verlag (N.Y., Berlin, Heidelberg, 1991), Chapt. 4, and Appendix A.

\bibitem{GGN1} I. P. Goulden, M. Guay-Paquet and J. Novak, ``Monotone Hurwitz numbers and the HCIZ Integral'', 
 {\em Ann. Math. Blaise Pascal} {\bf 21}, 71-99 (2014).

\bibitem{GH1} M. Guay-Paquet and J. Harnad, ``2D Toda $\tau$-functions as combinatorial generating functions'',  {\em Lett. Math. Phys.} {\bf 105}, 827-852 (2015).

\bibitem{GH2} M. Guay-Paquet and J. Harnad, ``Generating functions for weighted Hurwitz numbers'',  {\em J. Math. Phys.} {\bf  58}, 083503 (2017).

\bibitem{HC} Harish-Chandra,  ``Differential operators on a semisimple Lie algebra'',
{\em Amer. J. Math.} {\bf 79}, 87-120 (1957).

\bibitem{H1} J. Harnad,  ``Weighted Hurwitz numbers and hypergeometric $\tau$-functions: an overview'', {\em  Proc.~Symp.~Pure Math} {\bf 93}, 289-333 (2016). 

\bibitem{H2} J. Harnad,  ``Quantum Hurwitz numbers  and Macdonald polynomials'', {\em J. Math. Phys.} {\bf 57} 113505 (2016).

\bibitem{H3} J. Harnad, ``Multispecies weighted Hurwitz numbers'', {\em SIGMA} {\bf 11} 097, (2015).

\bibitem{HO} J. Harnad and A. Yu. Orlov, ``Hypergeometric $\tau$-functions, Hurwitz numbers and enumeration of paths'',  {\em Commun. Math. Phys. } {\bf 338}, 267-284 (2015).

\bibitem{Hu1} A. Hurwitz, ``\"Uber Riemann'sche Fl\"asche mit gegebnise Verzweigungspunkten'', 
{\em Math. Ann.} {\bf 39}, 1-61 (1891); Matematische Werke I, 321-384.

\bibitem{Hu2} A. Hurwitz, ``\"Uber die Anzahl der Riemann'sche Fl\"asche mit gegebnise Verzweigungspunkten'', 
{\em Math. Ann.} {\bf 55}, 53-66 (1902); Matematische Werke I, 42-505.

\bibitem{IZ}  C. Itzykson C and J.-B. Zuber  ``The planar approximation. II '', {\em J. Math. Phys.} {\bf  21},
 411-421 (1980 ).

\bibitem{KZ} M. Kazarian and P. Zograf, ``Virasoro constraints and topological recursion for Grothendieck's dessin counting'',  {\it Lett. Math. Phys.}  {\bf 105} 1057-1084 (2015).

\bibitem{LZ} S. K.  Lando and A.K.  Zvonkin, {\em Graphs on Surfaces and their Applications}, Encyclopaedia of Mathematical Sciences, Volume {\bf 141}, Springer-Verlag (2004).
 
\bibitem{Mac} I.~G. ~Macdonald, {\em Symmetric Functions and Hall Polynomials},
Clarendon Press, Oxford, (1995).

\bibitem{Ok} A.~Okounkov, ``Toda equations for Hurwitz numbers'', {\em Math.~Res.~Lett.} {\bf 7}, 447--453 (2000).
   
\bibitem{OrSc} A.~Yu.~Orlov and D.~M.~Scherbin, ``Hypergeometric solutions of soliton equations'', {\em Theor. Math. Phys.} {\bf 128}, 906-926 (2001).

\bibitem{Pa} R. Pandharipande, ``The Toda Equations and the Gromov-Witten Theory of
the Riemann Sphere'',  {\em Lett. Math. Phys.} {\bf  53}, 59-74 (2000).

\bibitem{Sa} M. Sato, ``Soliton equations as dynamical systems on infinite dimensional Grassmann 
manifolds'', RIMS, Kyoto Univ. {\it Kokyuroku} {\bf 439},  30-46 (1981).

\bibitem{Sch} I. Schur, ``Neue Begr\"undung' der Theorie der Gruppencharaktere'', {\em Sitzber.  Akad. Wiss., Berlin}, 406-432  (1905).

 \bibitem{SW} G. Segal  and G. Wilson, ``Loop groups and equations of KdV type '', {\em Publ. Math. IH\'ES} {\bf 6}, 5-65 (1985).

\bibitem{Z} P. Zograf,  ``Enumeration of Grothendieck's dessins and KP hierarchy'', 
{\em Int. Math Res. Notices} {\bf 24}, 13533-13544 (2015).

\end{thebibliography}
\end{document}